\title{Formalizing Termination Proofs under Polynomial Quasi-interpretations
}
\author{Naohi Eguchi%
\thanks{The author is supported by Grants-in-Aid for JSPS Fellows (Grant No.
$25 \cdot 726$).}
\institute{Department of Mathematics and Informatics \\
Chiba University,
Japan%
}
\email{neguchi@g.math.s.chiba-u.ac.jp}
}
\begin{document}

\maketitle

\theoremstyle{plain}
\newtheorem{theorem}{Theorem}
\newtheorem{corollary}{Corollary}
\newtheorem{lemma}{Lemma}
\newtheorem{claim}{Claim}
\newtheorem{proposition}{Proposition}

\theoremstyle{definition}
\newtheorem{definition}{Definition}

\theoremstyle{remark}
\newtheorem{remark}{Remark}
\newtheorem{example}{Example}

\begin{abstract}
Usual termination proofs for a functional program require to check all
 the possible reduction paths.
Due to an exponential gap between the height and size of such the
 reduction tree, no naive formalization of termination proofs yields a
 connection to the polynomial complexity of the given program.
We solve this problem employing the notion of minimal function
 graph, a set of pairs of a term and its normal form, which is defined
 as the least fixed point of a monotone operator.
We show that termination proofs for programs reducing under
 lexicographic path orders (LPOs for short) and polynomially
 quasi-interpretable can be optimally performed in a weak fragment of
 Peano arithmetic.
This yields an alternative proof of the fact that every function
 computed by an LPO-terminating, polynomially quasi-interpretable
 program is computable in polynomial space.
The formalization is indeed optimal since every polynomial-space computable
 function can be computed by such a program.
The crucial observation is that inductive definitions of minimal
 function graphs under LPO-terminating programs can be approximated with
 transfinite induction along LPOs.
\end{abstract}

\section{Introduction}

\subsection{Motivation}
\label{ss:motivation}

The termination of a program states that any reduction under the program leads to a normal form.
Recent developments in termination analysis of first order functional
programs, or of {\em term rewrite systems} more specifically, have drawn
interest in computational resource analysis, i.e., not just the
termination but also the estimation of
time/space-resources required to execute a given program,
which includes the polynomial run-space complexity analysis.
Usual termination proofs for a program require to check all the
possible reduction paths under the program.
Due to an exponential gap between the {\em height} and {\em size} of such the
reduction tree, no naive termination proof yields a connection to the
polynomial complexity of the given program.
For the sake of optimal termination proofs, it seems necessary to discuss ``all the possible reduction paths'' by means of an alternative notion smaller in size than reduction trees.

\subsection{Backgrounds}
\label{ss:background}

Stemming from \cite{Thompson72}, there are various functional characterizations of polynomial-space computable functions
\cite{LM95,Oitavem01,Oitavem02,Eguchi10},
Those characterizations state that every poly-space computable function can be defined by a finite set of equations, i.e., by a functional program.
Orienting those equations suitably, such programs reduce under a termination order, the
{\em lexicographic path orders} (LPOs for short).
The well-founded-ness of LPOs yields the termination of the reducing programs.

In the seminal work \cite{buch95}, it was discussed, depending on the choice of a
termination order, what mathematical axiom is necessary to formalize termination proofs by the termination order within Peano arithmetic $\mathrm{PA}$ that axiomatizes ordered semi-rings
with mathematical induction.
In case of
{\em multiset path orders} (MPOs for short),
termination proofs can be formalized in the fragment of $\mathrm{PA}$
with induction restricted to computably enumerable sets.
This yields an alternative proof of the fact that every function computed by an MPO-terminating program is primitive recursive, cf. \cite{Hof90}.
The formalization is optimal since every primitive recursive function can be computed by an MPO-terminating program.
In case of LPOs,
termination proofs can be formalized in the fragment with
induction restricted to expressions of the form ``$f$ is total'' for
some computable function $f$.
The formalization is optimal in the same sense as in case of MPOs, cf. \cite{Weier95}.

In more recent works \cite{BMM01,BMM11}, MPOs and LPOs are combined with
{\em polynomial quasi-interpretations} (PQIs for short).
Unlike (strict) polynomial interpretations \cite{BCMT01}, the existence of a quasi-interpretation does not tell us anything about termination.
However, combined with these termination orders, the PQI can be a powerful method in computational resource analysis.
Indeed, those functional programs characterizing poly-space computable functions that was motioned above admit PQIs.
This means that every poly-space computable function can be computed by an LPO-terminating program that admits a PQI.
Moreover, conversely, every function computed by such a program is
computable in polynomial space \cite[Theorem 1]{BMM01}.

\subsection{Outline}
\label{s:outline}

In Section \ref{s:prelim} we fix the syntax of first order functional
programs and the semantics in accordance with the syntax.
In Section \ref{s:lpo_pqi} we present the definitions of LPOs and PQIs
together with some examples, stating an application to poly-space
computable functions (Theorem \ref{t:pqi}, \cite[Theorem 1]{BMM01}).
In Section \ref{s:bndd_arith} we present the framework of formalization.
For an underlying formal system, a second order system $\mU^1_2$ of
{\em bounded arithmetic} \cite{buss86}, which can be regarded as a weak fragment of
$\mathrm{PA}$, seems suitable since it is known that the system $\mU^1_2$ is
complete for poly-space computable functions (Theorem
\ref{t:Buss86}.\ref{t:Buss86}).

In \cite{buch95}, the termination of a program reducing under an LPO
$<_{\lpo}$ is deduced by showing that, given a term $t$, a tree containing all the possible reduction chains starting with $t$ is well founded under $<_{\lpo}$.
The same construction of such reduction trees does not work in $\mU^1_2$
essentially because the exponentiation $m \mapsto 2^m$ is not available.
We lift the problem employing the notion of
{\em minimal function graph} \cite{JonesM86,Jones97,Marion03}, a set of pairs of
a term and its normal form.
Given a term $t$, instead of constructing a reduction tree rooted at
$t$, we construct a (subset of a) minimal function graph that stores
the pair of $t$ and a normal form of $t$.
Typically, a minimal function graph is inductively defined, or in other
words defined as the least fixed point
of a monotone operator.
Let us recall that the set of natural numbers is the least fixed point of the
operator
$m \in \Gamma (X) \Longleftrightarrow
 m=0 \vee \exists n \in X \text{ s.t. } m = n+1$.
As seen from this example, many instances of inductive definitions are
induced by operators of the form
$t \in \Gamma (X) \Longleftrightarrow
 \exists s_1, \dots, s_k \in X \cdots$.
Crucially, a minimal function graph under a program reducing under an
LPO $<_{\lpo}$ can be defined as the least fixed point of
such an operator but also
$t \in \Gamma (X) \Longleftrightarrow
 \exists s_1, \dots, s_k \in X \wedge s_1, \dots, s_k <_{\lpo} t \cdots$
holds.
Thanks to the additional condition
$s_1, \dots, s_k <_{\lpo} t$, the minimal function graphs under the
program can be defined by $<_{\lpo}$-transfinite induction as well as
inductive definitions.
In Section \ref{s:mfg} this idea is discussed in more details.

In the main section, Section \ref{s:termination}, the full details about
the formalization are given.
Most of the effort is devoted to deduce in $\mU^1_2$ an appropriate form of
transfinite induction along LPOs (Lemma \ref{l:TI:B}).
Based on the idea above, we then construct a minimal function graph $G$ for
a given program $\RS$ reducing under an LPO $<_{\lpo}$ by
$<_{\lpo}$-transfinite induction (Theorem \ref{t:MFG}).
Since $G$ stores all the pairs of a term and its $\RS$-normal form, this means
the termination of the program $\RS$.

In Section \ref{s:application} it is shown that the formalization presented
in Section \ref{s:termination} yields that every function
computed by an LPO-terminating program that admits a PQI is poly-space
computable (Corollary \ref{c:pqi}).
This shows that the formalization is optimal since such programs can only compute
poly-space computable functions as mentioned in Section \ref{ss:background}.

\section{Syntax and semantics of first order functional programs}
\label{s:prelim}

Throughout the paper, a {\em program} denotes a
{\em term rewrite system}.
We sometimes use unusual notations or formulations for the sake of simplification.
More precise, widely accepted formulations can be found, e.g., in \cite{Terese}.
\begin{definition}[Constuctor-, basic-, terms, rewrite rules, sizes of terms]
Let $\CS$ and $\DS$ be disjoint finite signatures, respectively of
{\em constructors} and {\em defined} symbols, and
$\VS$ a countably infinite set of {\em variables}.
We assume that $\CS$ contains at least one constant.
The sets
$\mathbf{T} (\CS \cup \DS,\VS)$ of {\em terms},
$\mathbf{T} (\CS, \VS)$ of {\em constructor} terms,
$\mathbf{B} (\CS \cup \DS, \VS)$ of {\em basic} terms and
$\RS (\CS \cup \DS, \VS)$ of {\em rewrite rules} are distinguished as follows.
\[
\begin{array}{lrcll}
 (\text{Terms}) & t & ::= &
  x \mid c (t_1, \dots, t_l) \mid f(t_1, \dots, t_l) &
  \in \mathbf{T(C \cup D,V)}; \\
 (\text{Constructor terms}) & s & ::= &
  x \mid c(s_1, \dots, s_k) & \in \mathbf{T(C,V)}; \\
 (\text{Basic terms}) & u & ::= &
  f(s_1, \dots, s_k) & \in \mathbf{B(C \cup D,V)}; \\
(\text{Rewrite rules}) & \rho & ::= &
  u \rightarrow t & \in \mathbf{R(C \cup D, V)},
\end{array}
\]
where $x \in \VS$, $c \in \CS$, $f \in \DS$,
$t, t_1, \dots, t_l \in \mathbf{T(C \cup D, V)}$,
$s_1, \dots, s_k \in \mathbf{T(C,V)}$
 and $u \in \mathbf{B(C \cup D,V)}$.
For such a class $\mathbf{S(F,V)}$ of terms,
$\mathbf{S(F)}$ denotes the subset of closed terms.
The {\em size} $\size{t}$ of a term $t$ is defined as
$\size{x} = 1$ for a variable $x$ and
$\size{f(t_1, \dots, t_k)} = 1 + \sum_{j=1}^k \size{t_j}$.
\end{definition}

\begin{definition}[Substitutions, quasi-reducible programs, rewrite relations]
A {\em program} $\RS$ is a finite subset of $\mathbf{R(C \cup D, V)}$
consisting of rewrite rules of the form $l \rightarrow r$
such that the variables occurring in $r$ occur in $l$ as well.
A mapping
$\theta: \VS \rightarrow \mathbf{S(F,V)}$
from variables to a set $\mathbf{S(F,V)}$ of terms
is called a {\em substitution}.
For a term $t \in \mathbf{S(F,V)}$,
$t \theta$ denotes the result of replacing every variable $x$ with
$\theta (x)$.
A program $\RS$ is {\em quasi-reducible} if, for any closed basic term
$t \in \mathbf{B(C \cup F)}$, there exist a rule
$l \rightarrow r \in \RS$ and a substitution
$\theta: \VS \rightarrow \TC{}$ such that
$t = l \theta$.
We restrict reductions to those under
{\em call-by-value} evaluation, or {\em innermost} reductions more
precisely.
For three terms $t, u, v$, we write $t[u/v]$ to denote the result
of replacing an occurrence of $v$ with $u$.
It will not be indicated which occurrence of $v$ is replaced if no confusion
likely arises.
We write
$t \rew s$ if
$s = t[r \theta / l \theta]$ holds for some rule
$l \rightarrow r \in \RS$ and constructor substitution
$\theta: \VS \rightarrow \TC{}$.
We write $\rewast$ to denote the reflexive and transitive closure of
$\rew$ and
$t \rewnf s$ if $t \rewast s$ and $s$ is a normal form.
By definition, for any quasi-reducible program $\RS$,
if $t \rewnf s$ and $t$ is closed, then
$s \in \TC{}$ holds.
\end{definition}

A program $\RS$ {\em computes} a function if any closed basic term has a
unique normal form in $\TC{}$.
In this case, for every $k$-ary function symbol $f \in \DS$,
a function
$\fint{f}: \TC{}^k \rightarrow \TC{}$
is defined by
$\fint{f} (s_1, \dots, s_k) = s \Longleftrightarrow
 f(s_1, \dots, s_k) \rewnf s$.

\section{Lexicographic path orders and quasi-interpretations}
\label{s:lpo_pqi}

Lexicographic path orders are {\em recursive path orders} with lexicographic
status only, whose variant was introduced in \cite{Kamin_Levy}.
Recursive path orders with multiset status only were introduced in
\cite{Dershowitz82} and a modern formulation with both multiset and lexicographic status can be found in
\cite[page 211]{Terese}.
Let $<_{\FS}$ be a (strict) {\em  precedence}, a well-founded partial order
on a signature $\FS = \CS \cup \DS$.
We always assume that every constructor is $<_{\FS}$-minimal.
The {\em lexicographic path order} (LPO for short)
$<_{\lpo}$ induced by $<_{\FS}$ is defined recursively by the
following three rules.
\label{page:lpo}
\begin{enumerate}
\item $\displaystyle
       \frac{s \leqslant_{\lpo} t_i}{s <_{\lpo} g(t_1, \dots, t_l)}$
      $(i \in \{ 1, \dots, l \})$
\label{d:lpo:1}
\item $\displaystyle
       \frac{
             s_1 <_{\lpo} g(t_1, \dots, t_l) \quad \cdots \quad
             s_k <_{\lpo} g(t_1, \dots, t_l)
            }%
            {f(s_1, \dots, s_k) <_{\lpo} g(t_1, \dots, t_l)}
       \ (f <_{\FS} g \in \DS)$
\label{d:lpo:2}
\item $\displaystyle
       \frac{
             s_1 = t_1 \ \cdots \
             s_{i-1} = t_{i-1} \quad
             s_i <_{\lpo} t_i \quad
             s_{i+1} <_{\lpo} t \ \cdots \
             s_{k} <_{\lpo} t
            }%
            {f(s_1, \dots, s_k) <_{\lpo} f(t_1, \dots, t_k) =t}
       \ (f \in \DS)$
\label{d:lpo:3}
\end{enumerate}

We say that a program $\RS$
{\em reduces under} $<_{\lpo}$ if
$r <_{\lpo} l$ holds for each rule $l \rightarrow r \in \RS$
 and that $\RS$ is
{\em LPO-terminating} if there exists an LPO under which $\RS$ reduces.
We write
$s <_{\lpo}^{\!\text{\tiny $\langle i \rangle$}} t$
if $s <_{\lpo} t$ results as an instance of the above
$i^{\text{th}}$ case
($i = \ref{d:lpo:1}, \ref{d:lpo:2}, \ref{d:lpo:3}$).
Corollary \ref{c:lpo} is a consequence of the definition of LPOs, following from 
$<_{\FS}$-minimality of constructors.
\begin{corollary}
\label{c:lpo}
If $s <_{\lpo} t$ and $t \in \TC{}$, then
$s \dlpo{\lpo}{1} t$ and
$s \in \TC{}$.
\end{corollary}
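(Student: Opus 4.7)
The plan is to prove both conclusions simultaneously by induction on the size $\size{t}$. Since $t \in \TC{}$, we can write $t = c(t_1, \ldots, t_l)$ for some constructor $c \in \CS$ with each $t_j \in \TC{}$. The first step is to rule out Cases~\ref{d:lpo:2} and~\ref{d:lpo:3} as possible ways of deriving $s <_{\lpo} t$. Case~\ref{d:lpo:3} carries the side-condition that the head symbol $f$ of the right-hand side lies in $\DS$, which is impossible since $c \in \CS$ and $\CS, \DS$ are disjoint. Case~\ref{d:lpo:2} would require some $f$ with $f <_{\FS} c$, but the standing assumption that every constructor is $<_{\FS}$-minimal rules this out. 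Hence $s <_{\lpo} t$ must arise by Case~\ref{d:lpo:1}, giving $s \dlpo{\lpo}{1} t$ directly.

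For the second conclusion, that same Case~\ref{d:lpo:1} yields $s \leqslant_{\lpo} t_j$ for some $j$, with $t_j \in \TC{}$. If $s = t_j$ then $s \in \TC{}$ is immediate. Otherwise $s <_{\lpo} t_j$ and $\size{t_j} < \size{t}$, so the induction hypothesis applied to $t_j$ gives $s \in \TC{}$.

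The argument is entirely structural and I do not expect any serious obstacle. The two ingredients doing all the work are the syntactic restriction on the head symbol in Case~\ref{d:lpo:3} and the $<_{\FS}$-minimality of constructors blocking Case~\ref{d:lpo:2}; together they ensure that any $<_{\lpo}$-descent from a closed constructor term must proceed via the subterm rule and therefore remains inside $\TC{}$, which is exactly what later sections will need when analysing reduction to normal forms.
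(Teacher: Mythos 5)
Your proof is correct and matches the paper's intent: the paper gives no explicit proof, merely noting that the corollary ``is a consequence of the definition of LPOs, following from $<_{\FS}$-minimality of constructors,'' and your case analysis (rule~\ref{d:lpo:3} blocked by the head symbol being in $\DS$, rule~\ref{d:lpo:2} blocked by minimality of constructors, then induction through rule~\ref{d:lpo:1}) is exactly the argument being alluded to.
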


A {\em quasi-interpretation} $\qint{\cdot}$ for a signature
$\FS$ is a mapping from $\FS$ to
functions over naturals fulfilling
(i) $\qint{f}: \mathbb N^k \rightarrow \mathbb N$ for each
     $k$-ary function symbol $f \in \FS$,
(ii) $\qint{f} (\dots, m, \dots) \leq
      \qint{f} (\dots, n, \dots)$
     whenever $m < n$,
(iii) $m_j \leq \qint{f} (m_1, \dots, m_k)$ for any
     $j \in \{ 1, \dots, k \}$, and
(iv) $0 < \qint{f}$ if $f$ is a constant.
A quasi-interpretation $\qint{\cdot}$ for a signature $\FS$ is extended to
closed terms $\mathbf{T(F)}$ by
$\qint{f(t_1, \dots, t_k)} =
 \qint{f} (\qint{t_1}, \dots, \qint{t_k})$.
Such an interpretation $\qint{\cdot}$ is called a quasi-interpretation for a
program $\RS$ if
$\qint{r \theta} \leq \qint{l \theta}$ 
holds for each rule $l \rightarrow r \in \RS$ and
for any constructor substitution
$\theta: \VS \rightarrow \TC{}$.
A program $\RS$
{\em admits a polynomial quasi-interpretation}
(PQI for short) if there
exists a quasi-interpretation $\qint{\cdot}$ for $\RS$ such that 
$\qint{f}$ is polynomially bounded for each $f \in \FS$.
A PQI $\qint{\cdot}$ is called {\em kind $0$}
(or {\em additive} \cite{BMM11})  if,
for each constructor $c \in \CS$,
$\qint{c} (m_1, \dots, m_k) =
 d + \sum_{j=1}^k m_j$
holds for some constant $d > 0$.
An LPO-terminating program $\RS$ is called an
{\em $\LPOPoly$}-program if
$\RS$ admits a kind $0$ PQI.

\begin{theorem}[\cite{BMM01}]
\label{t:pqi}
Every function computed by an $\LPOPoly$-program is computable in polynomial space.
\end{theorem}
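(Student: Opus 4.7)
The plan is to build a polynomial-space algorithm that, on a closed basic input $t = f(\vec{v})$ of size $n$, computes the normal form $\fint{f}(\vec{v})$ by simulating innermost reduction. Three ingredients are needed: a size bound from the quasi-interpretation, a recursive evaluator, and a stack-depth bound.

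First I would extract the size bound. Additivity (kind $0$) gives $|s| \leq c \cdot \qint{s}$ for every closed constructor term $s$, with $c$ depending only on $\CS$. Combined with monotonicity and the rule inequalities $\qint{r \theta} \leq \qint{l \theta}$, an induction on the length of an innermost reduction shows $\qint{u} \leq \qint{t}$ for every term $u$ appearing in a reduction from $t$. Since $\qint{\cdot}$ is polynomially bounded, this yields $|u| \leq p(n)$ for some fixed polynomial $p$; in particular $|\fint{f}(\vec{v})| \leq p(n)$. Second, I would introduce a recursive evaluator $\textsf{Eval}(u)$: on a closed basic $u = g(\vec{w})$ with $\vec{w}$ already in constructor form, match the unique applicable rule $l \rightarrow r$ (using quasi-reducibility), form $r \theta$, and recurse in left-to-right innermost order on the basic redexes of $r \theta$ until a constructor term is obtained. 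By the size bound, each activation frame---storing an instantiated right-hand side together with a redex pointer and the so-far-evaluated subterms---has polynomial size.

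The main obstacle is to bound the depth of the $\textsf{Eval}$ stack by a polynomial in $n$, since polynomial frame size alone would only yield $\mathrm{EXPSPACE}$. Any nested inner call $u'$ appears as a basic subterm of some $r \theta$ satisfying $r \theta <_{\lpo} l \theta = u$, so LPO case~\ref{d:lpo:1} gives $u' <_{\lpo} u$, making the stack a $<_{\lpo}$-descending chain of basic terms each of size $\leq p(n)$. I would bound the length of this chain by separating transitions into two kinds: head-symbol drops in $<_{\FS}$, at most $|\DS|$ in a row since the precedence is a finite strict order on $\DS$, and same-head-symbol steps, in which LPO case~\ref{d:lpo:3} forces a strict lexicographic decrease of the constructor argument tuple under $<_{\lpo}$. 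Using additivity, each constructor argument is measured by an integer in $[0, p(n)]$ (essentially its size, up to the constant $c$), so the lex-rank of a $k$-tuple of such integers is bounded by $(p(n)+1)^k$, still polynomial in $n$ for the fixed maximum arity $k$. Interleaving the two kinds of transitions thus gives a polynomial stack depth; combined with the polynomial frame size this exhibits $\textsf{Eval}$ as a polynomial-space procedure, yielding Theorem~\ref{t:pqi}.
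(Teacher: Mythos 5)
Your argument is correct in substance but takes a genuinely different route from the paper. You give a direct algorithmic proof: a recursive call-by-value evaluator whose activation frames are polynomially bounded via the kind~$0$ PQI, and whose recursion depth is polynomially bounded by a combinatorial analysis of $<_{\lpo}$-descending chains of basic terms (finitely many head-rank drops in $<_{\FS}$ interleaved with lexicographic descent on size-tuples of constructor arguments, each maximal same-head run having length at most $(p(n)+1)^k$). This is essentially the spirit of the original proof in \cite{BMM01}. The paper instead obtains the statement as Corollary \ref{c:pqi} by formalizing the termination proof inside the second order bounded-arithmetic system $\mU^1_2$: it derives $<_{\ell}$-transfinite induction for $\BSig{1}$-formulas (Lemmas \ref{l:TI:C}, \ref{l:TI:k} and \ref{l:TI:B}), uses it to construct a minimal function graph containing the pair $\seq{t,s}$ of a basic term and its normal form (Theorem \ref{t:MFG}), and then extracts the polynomial-space algorithm from the witnessing theorem for $\mU^1_2$ (Lemma \ref{l:Buss}). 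Your explicit stack-depth bound via lex-ranks is the combinatorial counterpart of the paper's nested $(\BSig{1} \pind)$ arguments, and your evaluator is roughly what the witnessing theorem would extract. What your approach buys is elementarity and self-containedness; what the paper's buys is its actual main result, an optimal formalization of LPO-termination proofs in a system complete for polynomial space, of which the space bound is only a corollary.

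One step you should tighten: not every nested call of $\textsf{Eval}$ is literally a basic subterm of the instantiated right-hand side $r\theta$, so LPO case \ref{d:lpo:1} does not apply to all of them directly. After some inner redexes have been normalized, the evaluator is called on terms obtained from $r\theta$ by replacing basic subterms with their constructor normal forms. You need the additional fact that replacing a basic subterm $t'$ of $s <_{\lpo} u$ by a constructor term $s'$ with $\qint{s'} \leq \qint{t'}$ preserves both the $<_{\lpo}$-descent below $u$ and the polynomial size bound on the arguments of the resulting basic subterms; this is exactly Lemma \ref{l:sub} of the paper, and it relies on $<_{\FS}$-minimality of constructors to obtain $s' <_{\lpo} t'$. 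With that lemma inserted, your descending-chain argument and hence the whole proof goes through.
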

Conversely, every polynomial-space computable function can be computed by an
$\LPOPoly$-program \cite[Theorem 1]{BMM01}.
In \cite{BMM11} various examples of programs admitting (kind $0$) PQIs are illustrated, including $\LPOPoly$-programs
$\RS_{\m{lcs}}$ and $\RS_{\m{QBF}}$ below.
\begin{example}
\label{ex:LCS}
The length of the {\em longest common subsequences} of two strings can
 be computed by a program $\RS_{\m{lcs}}$ \cite[Example 6]{BMM11},
which consists of the following rewrite rules defined over a signature
 $\FS = \CS \cup \DS$ where
 $\CS = \{ \m{0}, \m{s}, \epsilon, \m{a}, \m{b} \}$
 and
 $\DS = \{ \m{max}, \m{lcs} \}$.
\[
 \begin{array}[t]{rclrcll}
  \m{max} (x, \m{0}) & \rightarrow & x &
   \m{max} (\m{s} (x), \m{s} (y)) & \rightarrow &
   \m{s} (\m{max} (x, y)) & \\
  \m{max} (\m{0}, y) & \rightarrow & y & & & & \\
  \m{lcs} (x, \epsilon) & \rightarrow & \m{0} &
   \m{lcs} (\m{i} (x), \m{i} (y)) & \rightarrow &
   \m{s} (\m{lcs} (x, y)) & (\m{i} \in \{ \m{a}, \m{b} \}) \\
  \m{lcs} (\epsilon, y) & \rightarrow & \m{0} &
   \m{lcs} (\m{i} (x), \m{j} (y)) & \rightarrow &
   \m{max} (\m{lcs} (x, \m{j} (y)), \m{lcs} (\m{i} (x), y)) &
   (\m{i} \neq \m{j} \in \{ \m{a}, \m{b} \} )
 \end{array}
\]
Natural numbers are built of $\m{0}$ and $\m{s}$ and
strings of $\m{a}$ and $\m{b}$ as
$\m{a} (u) = \m{a} u$
for a string $u \in \{ \m{a}, \m{b} \}^\ast$.
The symbol $\epsilon$ denotes
 the empty string.
 Define a precedence $<_{\FS}$ on $\FS$ by
 $\m{max} <_{\FS} \m{lcs}$.
 Assuming that every constructor is $<_{\FS}$-minimal, the program
 $\RS_{\m{lcs}}$ reduces under the LPO $<_{\lpo}$ induced by $<_{\FS}$.
 For instance, the orientation
 $\m{max} (\m{lcs} (x, \m{b} (y)), \m{lcs} (\m{a} (x), y))
 <_{\lpo} \m{lcs} (\m{a} (x), \m{b} (y))$
 can be deduced as follows.
 The orientation
 $y \dlpo{\lpo}{1} \m{b} (y)$
 yields
 $\m{lcs} (\m{a} (x), y) \dlpo{\lpo}{3} \m{lcs} (\m{a} (x), \m{b} (y))$
 while
 $x \dlpo{\lpo}{1} \m{a} (x)$ and
 $\m{b} (y) \dlpo{\lpo}{1} \m{lcs} (\m{a} (x), \m{b} (y))$
 yield
 $\m{lcs} (x, \m{b} (y)) \dlpo{\lpo}{3} \m{lcs} (\m{a} (x), \m{b} (y))$.
 These together with $\m{max} <_{\FS} \m{lcs}$ yield
 $\m{max} (\m{lcs} (x, \m{b} (y)), \m{lcs} (\m{a} (x), y))
  \dlpo{\lpo}{2} \m{lcs} (\m{a} (x), \m{b} (y))$.
 It can be seen that the program $\RS_{\m{lcs}}$ admits the kind $0$ PQI
 $\qint{\cdot}$ defined by
 \begin{eqnarray*}
 \qint{\m{0}} = \qint{\epsilon}& =& 1, \\
 \qint{\m{s}} (x) = \qint{\m{a}} (x) = \qint{\m{b}} (x)& =& 1+x, \\
 \qint{\m{max}} (x, y) = \qint{\m{lcs}} (x, y)& =& \max ( x, y).
 \end{eqnarray*}
 This is exemplified as
 $\qint{\m{max} (\m{lcs} (x, \m{b} (y)), \m{lcs} (\m{a} (x), y))}
  =
  \max \big( \max (x, 1+y), \max (1+x, y)
 \big)
  \leq
 \max (1+x, 1+y) =
  \qint{\m{lcs} (\m{a} (x), \m{b} (y))}$.
 Thus Theorem \ref{t:pqi} implies that the function
 $\fint{\m{lcs}}$ can be computed in polynomial space.
\end{example}
\begin{example}
\label{ex:QBF}  
The {\em Quantified Boolean Formula} (QBF) problem can be solved by
a program $\RS_{\m{QBF}}$
\cite[Example 36]{BMM11},
which consists of the following rewrite rules defined over a signature
 $\FS = \CS \cup \DS$ where
 $\CS = \{ \m{0}, \m{s}, \m{nil}, \m{cons},
           \m{\top}, \m{\bot}, \m{var},
           \neg, \m{\vee}, \m{\exists}
 \}$
 and
$\DS = \{ {=}, \m{not}, \m{or}, \m{in}, \m{verify}, \m{qbf} \}$.
\[
\begin{array}[t]{rclrcl}
 \m{not} (\top) & \rightarrow & \bot &
  \m{not} (\bot) & \rightarrow & \top \\
 \m{or} (\top, x) & \rightarrow & \top &
  \m{or} (\bot, x) & \rightarrow & x \\
 \m{0} = \m{0} & \rightarrow & \top &
  \m{s} (x) = \m{0} & \rightarrow & \bot \\
 \m{0} = \m{s} (x) & \rightarrow & \bot &
  \m{s} (x) = \m{s} (y) & \rightarrow & x=y \\
 \m{in} (x, \m{nil}) & \rightarrow & \bot &
  \m{in} (x, \m{cons} (y, ys)) & \rightarrow &
   \m{or} (x=y, \m{in} (x, ys))
\end{array} 
\]
\[
 \begin{array}[t]{rcl}
 \m{verify} (\m{var} (x), xs) & \rightarrow & \m{in} (x, xs) \\
 \m{verify} (\neg x, xs) & \rightarrow &
  \m{not} (\m{verify} (x, xs)) \\
 \m{verify} (x \vee y, xs) & \rightarrow &
  \m{or} \left( \m{verify} (x, xs), \m{verify} (y, xs)
         \right) \\
 \m{verify} \left( (\exists x) y, xs \right) & \rightarrow &
  \m{or} \left( \m{verify} (y, \m{cons} (x, xs)),
	        \m{verify} (y, xs)
		   \right) \\
 \m{qbf} (x) & \rightarrow & \m{verify} (x, \m{nil})
\end{array}
\]
The symbol $\top$ denotes the true Boolean value while $\bot$ the false
 one.
Boolean variables are encoded with
 $\{ \m{0}, \m{s} \}$-terms, i.e., with naturals.
Formulas are built from variables operating
 $\m{var}$, $\neg$, $\vee$ or $\exists$.
Without loss of generality, we can assume that every QBF is built up in
 this way.
 As usual, terms of the forms
 ${=} (s, t)$, $\neg (t)$, ${\vee} (s, t)$ and ${\exists} (s, t)$
 are respectively denoted as
 $s =t$, $\neg t$, $s \vee t$ and $(\exists s) t$. 
By definition, for a Boolean formula
 $\varphi$ with Boolean variables $x_1, \dots, x_k$,
 $\fint{\m{verify}} (\varphi, [\cdots]) = \top$
 holds if and only if
 $\varphi$ is true with the truth assignment that
 $x_j = \top$ if $x_j$ appears in the list $[\cdots]$ and
 $x_j = \bot$ otherwise.
 
 Define a precedence $<_{\FS}$ over
 $\FS$ by
 $\m{not}, \m{or}, {=} <_{\FS} \m{in}
  <_{\FS} \m{verify} <_{\FS} \m{qbf}$.
 Assuming $<_{\FS}$-minimality of constructor, the program
 $\RS_{\m{QBF}}$ reduces under the LPO
 $<_{\lpo}$ induced by $<_{\FS}$.
 For instance, the orientation
 $\m{or} (\m{verify} (y, \m{cons} (x, xs)),
          \m{verify} (y, xs)
         )
  <_{\lpo}
  \m{verify} (\exists (x, y), xs)
 $
 can be deduced as follows.
 As well as
 $xs \dlpo{\lpo}{1} \m{verify} (\exists (x, y), xs)$,
 the orientation
 $x \dlpo{\lpo}{1} \exists (x, y)$
 yields
 $x \dlpo{\lpo}{1} \m{verify} (\exists (x, y), xs)$.
 These together with the assumption
 $\m{cons} <_{\FS} \m{verify}$ yield
 $\m{cons} (x, xs) \dlpo{\lpo}{2}
 \m{verify} (\exists (x, y), xs)$.
 This together with
 $y \dlpo{\lpo}{1} \exists (x, y)$ yields
 $\m{verify} (y, \m{cons} (x, xs)) \dlpo{\lpo}{3}
 \m{verify} (\exists (x, y), xs)$
 as well as
 $\m{verify} (y, xs) \dlpo{\lpo}{3}
 \m{verify} (\exists (x, y), xs)$.
 These orientations together with the assumption
 $\m{or} <_{\FS} \m{verify}$
 now allow us to deduce the desired orientation
 $\m{or} (\m{verify} (y, \m{cons} (x, xs)),
          \m{verify} (y, xs)
         )
  \dlpo{\lpo}{2}
  \m{verify} (\exists (x, y), xs)
 $.
 
 Furthermore, let us define a PQI
 $\qint{\cdot}$ for the signature $\FS$ by
 \[
  \begin{array}[t]{rcll}
   \qint{c} &= & 1 & \text{if } c \text{ is a constant,} \\
   \qint{x_1, \dots, x_k}&= &
    1+ \sum_{j=1}^k x_j & \text{if } c \in \CS
    \text{ with arity} > 0, \\
   \qint{f} (x_1, \dots, x_k)&= & \max_{j=1}^k x_j &
    \text{if } f \in \DS \setminus
    \{ \m{verify}, \m{qbf} \}, \\
   \qint{\m{verify}} (x, y)&= &x+y, & \\
   \qint{\m{qbf}} (x)&= &x+1. & \\
  \end{array}
 \]
 Clearly the PQI $\qint{\cdot}$ is kind $0$.
 Then the program $\RS_{\m{QBF}}$ admits the PQI.
 This is exemplified by the rule above as
 $\qint{\m{or} (\m{verify} (y, \m{cons} (x, xs)),
                \m{verify} (y, xs)
               )
       } 
 = \max \big( y+ (1+x+xs), y+xs \big)
 = (1+x+y)+xs =
 \qint{\m{verify} (\exists (x, y), xs)}$.
 Thus Theorem \ref{t:pqi} implies that the function
 $\fint{\m{qbf}}$ can be computed in polynomial space.
 This is consistent with the well known fact that the QBF problem is PSPACE-complete.
 \end{example}

\section{A system $\mU^1_2$ of second order bounded arithmetic}
\label{s:bndd_arith}
	
In this section, we present the basics of second order bounded arithmetic following
\cite{BeckB14}.
The original formulation is traced back to \cite{buss86}.
The non-logical language $\LBA$ of first order bounded arithmetic consists of the constant 
$0$, the successor $\mrS$, the addition ${+}$, the multiplication ${\cdot}$,
$|x| = \lceil \log_2 (x+1) \rceil$, the division by two
$\divtwo{x}$, the smash $\# (x, y) = 2^{|x| \cdot |y|}$ and ${\leq}$.
It is easy to see that $|m|$ is equal to the number of bits in the binary
representation of a natural $m$.
In addition to these usual symbols, we assume that the language
$\LBA$ contains
$\max (x, y)$.
The assumption makes no change if an underlying system is
sufficiently strong.
\begin{definition}[Sharply-, bounded quantifiers, bounded formulas,
$\mS^1_2$]
Quantifiers of the form 
$\exists x (x \leq t \wedge \cdots)$ or
$\forall x (x \leq t \rightarrow \cdots)$
for some term $t$ are called
{\em bounded} and quantifiers of the form
$(Q x \leq |t|) \cdots$ are called
{\em sharply} bounded.
{\em Bounded formulas} contain no unbounded first order quantifiers.
The classes $\bSig{i}$ ($i \in \mathbb N$) of bounded formulas are defined by
counting the number of alternations of
bounded quantifiers starting with an existential one, but ignoring
sharply bounded ones.
For each $i \in \mathbb N$, the first order system $\mS^i_2$ of bounded
arithmetic is axiomatized with a set $\m{BASIC}$ of open axioms
defining the $\LBA$-symbols together with the schema
$(\bSig{i} \pind)$ of bit-wise induction 
for $\bSig{i}$-formulas.
\begin{equation}
\tag{$\Phi \pind$}
\label{e:PIND}
 \varphi (0) \wedge \forall x 
 \big( \varphi (\divtwo{x}) \rightarrow \varphi (x)
 \big) \rightarrow 
 \forall x \varphi (x)
\quad (\varphi \in \Phi)
\end{equation}
\end{definition}

The precise definition of the basic axioms $\m{BASIC}$ can be found,
e.g., in \cite[page 101]{buss98}.

\begin{definition}[Second order bounded formulas,
$\mU^1_2$]
In addition to the first order language, the language of second order
bounded arithmetic contains second order variables
$X, Y, Z, \dots$ ranging over sets and the membership relation
${\in}$.
In contrast to the classes $\bSig{i}$, the classes
$\BSig{i}$ of second order bounded formulas are defined by counting alternations of
second order quantifiers starting with an existential one, but ignoring
first order ones.
By definition, $\BSig{0}$ is the class of bounded formulas with no
 second order quantifiers.
The second order system $\mU^1_2$ is axiomatized with 
$\m{BASIC}$, $(\BSig{1} \pind)$ and the axiom
$(\BSig{0} \comp)$ of comprehension for $\BSig{0}$-formulas.
\begin{equation}
\tag{$\Phi \comp$}
\label{e:comp}
 \forall \vec x ~ \forall \vec X ~ \exists Y (\forall y \leq t)
 \big( y \in Y \leftrightarrow \varphi (y, \vec x, \vec X)
 \big)
\quad (\varphi \in \Phi)
\end{equation}
\end{definition}
Unlike first order ones, second order quantifiers have no
explicit bounding.
However, due to the presence of a bounding term $t$ in the schema
$(\BSig{0} \comp)$, one can only deduce the existence of a set with a
bounded domain.
\begin{example}
 \label{ex:CA}
 The axiom $(\BSig{0} \comp)$ of comprehension allows us to transform
 given sets $\vec X$ into another set $Y$ via $\BSig{0}$-definable
 operations without inessential encodings.
 For an easy example, assume that two sets $U$ and $V$ encode binary strings respectively of
 length $m$ and $n$ in such a way that
 $j \in U$ $\Leftrightarrow$ ``the $j^{\mathrm{th}}$ bit of the string $U$ is $1$'' and
 $j \not\in U$ $\Leftrightarrow$ ``the $j^{\mathrm{th}}$ bit of the string $U$ is $0$''
 for each $j < m$.
 Then the {\em concatenation} $W = U \CON V$, the string $U$ followed by $V$, is
 defined by $(\BSig{0} \comp)$ as follows.
 \[
 (\forall j < m+n)
 \big[ j \in W \leftrightarrow
       \big( (j < m  \wedge j \in U) \vee
              (m \leq j \wedge j-m \in V)
       \big)
 \big]
 \]
\end{example}
\begin{definition}[Definable functions in formal systems]\label{d:ptf}
Let $T$ be one of the formal systems defined above and
$\Phi$ be a class of bounded formulas.
A function 
$f: \mathbb N^k \rightarrow \mathbb N$
is {\em $\Phi$-definable in $T$} if there exists a formula
$\varphi (x_1, \dots, x_k, y) \in \Phi$
with no other free variables such that
$\varphi (\vec x, y)$ expresses the relation
$f(\vec x) = y$
(under the standard semantics) and $T$ proves the sentence
$\forall \vec x ~ \exists ! y \varphi (\vec x, y)$.
\end{definition}
\begin{theorem}[\cite{buss86}]
\label{t:Buss86}%
\begin{enumerate}%
\item A function is $\bSig{1}$-definable in $\mS^1_2$ if and only if it is
 computable in polynomial time.
\label{t:Buss86:S}
\item A function is $\BSig{1}$-definable in $\mU^1_2$ if and only
  if it is computable in polynomial space.
\label{t:Buss86:U}
\end{enumerate}  
\end{theorem}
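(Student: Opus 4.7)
The plan is to follow the standard Buss-style strategy, treating each item via a \emph{witness extraction} direction (proofs yield algorithms) and a \emph{coding} direction (algorithms are provably total). For item~\ref{t:Buss86:S}, I would recall Cobham's characterisation of polynomial time via bounded recursion on notation: to show that every polytime function is $\bSig{1}$-definable in $\mS^1_2$, I would define each base function of Cobham's class by an obvious $\bSig{0}$-formula and verify that the $\bSig{1}$-graph of a function built by composition or BRN is provably total in $\mS^1_2$, the recursion step being exactly what $(\bSig{1}\pind)$ gives. Conversely, for the harder direction, I would carry out Buss's \emph{main witnessing theorem}: for any $\mS^1_2$-proof of a $\bSig{1}$-sentence $\exists y\leq t~\varphi(\vec x,y)$ with $\varphi\in\bSig{0}$, construct by induction on a cut-free proof (with cuts at $\bSig{1}$-formulas only) a polynomial-time functional $F$ such that $\varphi(\vec x,F(\vec x))$ holds; the crucial case is the $\pind$ rule, which unfolds into a $|x|$-fold iteration of a polytime step function and hence stays in polytime thanks to the smash $\#$ controlling the size of intermediate values.

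For item~\ref{t:Buss86:U}, I would lift the same pattern into second order. The completeness direction would proceed through a recursion-theoretic characterisation of $\mathrm{PSPACE}$ --- for instance Lind's scheme of bounded recursion with explicit polynomial bounds on values, or equivalently simulating an alternating polynomial-time Turing machine. Its poly-space work tape is represented by a second-order variable $Y$ whose domain is bounded by the relevant $\#$-term, and $(\BSig{0}\comp)$ together with $(\BSig{1}\pind)$ suffice to show that each step of the simulation produces a provably unique next configuration. The soundness direction again goes by a witnessing argument, but now the witness for $\exists Y~\varphi(\vec x,Y)$ is a set of polynomial domain; one interprets a $\mU^1_2$-proof as a procedure that, on input $\vec x$ and a query $j$, decides whether $j\in Y$. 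Each $(\BSig{1}\pind)$-step becomes a $|x|$-fold iteration of a \emph{poly-space} step computation (because the intermediate sets are still bounded in domain and can be recomputed on demand), and $(\BSig{0}\comp)$ is simulated by evaluating its defining bounded formula on queried inputs, which is clearly in $\mathrm{PSPACE}$.

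The main obstacle, and where I would spend most of the effort, is the second-order witnessing theorem for $\mU^1_2$. Two technical points have to be addressed. First, one must choose an appropriate normal form for proofs: since second-order cut elimination is not available directly, I would restrict cuts to $\BSig{1}$-formulas and then systematically push second-order quantifiers outward, replacing a hypothetical witness set by a \emph{computable} characteristic function threaded through the proof. Second, the interpretation of $(\BSig{1}\pind)$ must be carried out so that the polynomial-space resource bound is preserved across the induction: the natural naive simulation produces a recursion of depth $|x|$ whose branching blows up, so one has to argue, via the standard Savitch-like idea, that the induction may be implemented by a single tape of size polynomial in $|x|$ that is reused at each iteration. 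Once this is in place, together with the completeness direction via a Cobham/Lind-style class for $\mathrm{PSPACE}$, both equivalences follow.
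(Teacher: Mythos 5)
The paper establishes this theorem almost entirely by citation to \cite{buss86}: item \ref{t:Buss86:S} and both ``only if'' directions are taken as known (the latter via Lemma \ref{l:Buss}, stated without proof), and the only argument actually presented is an outline of the ``if'' direction of item \ref{t:Buss86:U}. It is precisely in that direction that your plan has a genuine gap. You propose to represent the work tape by a second-order variable $Y$ and to show that ``each step of the simulation produces a provably unique next configuration''. But a deterministic polynomial-space machine may run for $2^{q(|\vec m|)}$ steps, whereas $(\BSig{1} \pind)$ supplies only polynomially many induction steps (length induction on a $\#$-bounded value). A step-by-step simulation therefore never reaches the final configuration: proving correctness of one step and appealing to induction leaves an exponential mismatch between the number of machine steps and the number of induction steps available. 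The paper's outline is built exactly around closing this mismatch: the induction formula is $\varphi(\vec m, n) \equiv (\forall w \leq 2^{p(|\vec m|)}) \exists W\, \psi_M(\vec m, n, w, W)$, where $\psi_M$ asserts that $W$ codes a chain of $2^{|n|}$ consecutive configurations starting from $w$; each $\pind$ step \emph{doubles} the length of the witnessed computation by concatenating two witnesses $U \CON V$ obtained from the induction hypothesis applied to $w_0$ and to the midpoint configuration (using $2^{|n|} = 2^{|n|-1} + 2^{|n|-1}$), and instantiating $n := 2^{q(|\vec m|)}$ then covers the whole run. Note also that the universal quantifier over the starting configuration $w$ must sit inside the induction formula, since the midpoint configuration to which the second copy of the hypothesis is applied is not known in advance.

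You do invoke ``the standard Savitch-like idea'', but only in the witnessing (soundness) direction, where it plays a different role (reusing space when unwinding $\pind$); in the provability direction it is needed in the form just described, and your proposal never states it. Your alternative of simulating an alternating polynomial-time machine could in principle avoid the doubling trick (carrying the set of configurations that accept within $i$ steps and performing polynomially many $(\BSig{0} \comp)$ steps), but you give no induction invariant for it, and the sentence about a ``provably unique next configuration'' does not supply one. The remaining parts of your proposal --- Cobham plus Buss's witnessing theorem for item \ref{t:Buss86:S}, and a second-order witnessing argument for the ``only if'' direction of item \ref{t:Buss86:U} --- are consistent with the standard proofs that the paper simply cites.
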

To readers who are not familiar with second order bounded arithmetic,
it might be of interest to outline the proof that every polynomial-space
computable function can be defined in $\mU^1_2$.
The argument is commonly known as the
{\em divide-and-conquer} method, which was originally used to show the
classical inclusion
$\mathrm{NPSPACE} \subseteq \mathrm{PSPACE}$ \cite{Savitch70}.
 \begin{proof}[Proof of the ``if'' direction of Theorem
  \ref{t:Buss86}.\ref{t:Buss86:U} (Outline)]
 Suppose that a function
 $f: \mathbb N^k \rightarrow \mathbb N$
 is computable in polynomial space.
  This means that there exist
   a deterministic Turing machine $M$ and a polynomial
 $p: \mathbb N^k \rightarrow \mathbb N$
 such that, for any inputs $m_1, \dots, m_k$,
 $f(m_1, \dots, m_k)$
 can be computed by $M$ while the head of $M$ only visits a number of cells bounded by
 $p(|m_1|, \dots, |m_k|)$.
 Then, since the number of possible configurations under $M$ on inputs
 $m_1, \dots, m_k$ is bounded by
 $2^{q (|m_1|, \dots, |m_k|)}$ for some polynomial $q$,
 the computation terminates in a step
 bounded by $2^{q (|\vec m|)}$ as well.

 Let
 $\psi_M (m_1, \dots, m_k, n, w_0, W)$
 denote a $\BSig{0}$-formula expressing that the set
 $W$ encodes the concatenation 
 $w_1 \CON \cdots \CON w_{2^{|n|}}$
 of configurations under $M$, where $w_j$ is the next configuration of
 $w_{j-1}$, writing $w_j = \m{Next}_M (w_{j-1})$
 $(1 \leq j \leq 2^{|n|})$.
 Reasoning informally in $\mU^1_2$,
 the $\BSig{1}$-formula
 $\varphi (\vec m, n) : \equiv
 (\forall w \leq 2^{p (|\vec m|)}) \exists W
 \psi_M (\vec m, n, w, W)$
  can be deduced by $(\BSig{1} \pind)$ on $n$.
  In case $n=0$, $W$ can be defined identical to
  $\m{Next}_M (w)$.
  For the induction step, given a configuration
  $w_0 \leq 2^{p (|\vec m|)}$,
  the induction hypothesis yields a set $U$ such that
  $\psi_M (\vec m, \divtwo{n}, w_0, U)$ holds.
  Another instance of the induction hypothesis yields a set
  $V$ such that
  $\psi_M (\vec m, \divtwo{n}, w_{2^{|n| -1}}, V
          )$
  holds.
  Since $2^{|n|} = 2^{|n| -1} + 2^{|n| -1}$,
  $\psi_M (\vec m, n, w_0, W)$ holds for the set
  $W := U \CON V$.

  Now instantiating $n$ with $2^{q (|\vec m|)}$ yields a set
  $W$ such that
  $\psi_M (\vec m, 2^{q (|\vec m|)}, \m{Init}_M (\vec m), W)$ holds
  for the initial configuration $\m{Init}_M (\vec m)$ on inputs $\vec
  m$.
  The set $W$ yields the final configuration and thus the
  result $f(\vec m)$ of the computation.
  The uniqueness of the result can be deduced in
  $\mU^1_2$ accordingly.
 \end{proof}
The ``only if'' direction of Theorem \ref{t:Buss86}.\ref{t:Buss86:U} follows from a bit more general statement.
\begin{lemma}
 \label{l:Buss}
 If $\mU^1_2$ proves
 $\exists y \varphi (x_1, \dots, x_k, y)$ for a
 $\BSig{1}$-formula $\varphi (x_1, \dots, x_k, y)$ with no other
 free variables, then there exists a
 function
 $f: \mathbb N^k \rightarrow \mathbb N$
 such that, for any naturals
 $\vec m = m_1, \dots, m_k \in \mathbb N$, 
{\normalfont (i)} $f(\vec m)$ is computable with the use of space bounded by a polynomial in
 $|m_1|, \dots, |m_k|$, and
 {\normalfont (ii)}
 $\varphi (\underline{\vec m}, \underline{f(\vec m)})$ holds under the
 standard semantics, where $\underline{m}$ denotes
 the numeral $\mrS^m (0)$ for a natural $m$.
\end{lemma}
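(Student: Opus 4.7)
The plan is to apply Buss's witnessing technique, adapted to second-order bounded arithmetic, to the given $\mU^1_2$-proof of $\exists y\, \varphi(\vec x, y)$. The argument is essentially the ``only if'' direction of Theorem~\ref{t:Buss86}.\ref{t:Buss86:U}, generalised so that uniqueness of the witness is no longer required: without $\exists!y$ we still obtain \emph{some} PSPACE-computable $f$ with the stated property.

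Concretely, I would proceed as follows. \emph{(1)} Reformulate $\mU^1_2$ in a sequent calculus and apply free-cut elimination, so that every cut formula in the proof lies in $\BSig{1}$ (or simpler). \emph{(2)} By induction on the cut-restricted proof, attach to each derivable sequent $\Gamma \Rightarrow \Delta$ a polynomial-space witnessing procedure that, given witnesses for the antecedent, produces witnesses for the succedent's existential formulas. \emph{(3)} Apply the procedure attached to the end-sequent $\Rightarrow \exists y\, \varphi(\vec x, y)$ in order to extract~$f$. Second-order witnesses must be kept implicit: a set $Y$ produced by $(\BSig{0}\comp)$ has its domain bounded by a term $t(\vec x)$, and is represented by a PSPACE-indicator deciding the membership relation $j \in Y$ for each $j < t(\vec x)$, rather than by an explicit listing.

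The interesting cases of step (2) are $(\BSig{0}\comp)$ and $(\BSig{1}\pind)$. For the former, the indicator of the comprehended $Y$ is obtained by evaluating the defining $\BSig{0}$-formula, using PSPACE-indicators for the parameter sets; bounded first-order quantifiers contribute only a polynomial-time exhaustive search over the bound. For the latter, iterating the induction step $|x|$ times composes the step witness with itself polynomially many times; since PSPACE is closed under polynomially many compositions of PSPACE procedures (the same workspace being reused), the iterated procedure remains in PSPACE.

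The main obstacle is the bookkeeping of second-order witnesses across the $|x|$ iterations of $(\BSig{1}\pind)$: a set produced at the top of the iteration may in principle be defined in terms of an exponentially branching history of earlier sets. The remedy, mirroring the divide-and-conquer argument sketched above for Theorem~\ref{t:Buss86}.\ref{t:Buss86:U}, is to re-expand each indicator on demand by a recursion of polynomial depth rather than to materialise any intermediate set; each individual membership query is then answered in polynomial space, with intermediate workspace reused between subcalls. Soundness of the witnessing procedure with respect to the standard semantics finally yields $\varphi(\underline{\vec m}, \underline{f(\vec m)})$.
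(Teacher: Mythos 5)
The paper does not actually prove this lemma: it is stated as a known generalization of the ``only if'' direction of Theorem~\ref{t:Buss86}.\ref{t:Buss86:U} and implicitly deferred to Buss's witnessing theorem for $\mU^1_2$ in \cite{buss86}. Your proposal reconstructs precisely that standard argument --- free-cut elimination followed by a proof-theoretic witnessing induction, with second-order witnesses kept as implicit PSPACE membership indicators and the $(\BSig{1}\pind)$ iterations unwound by a polynomial-depth, space-reusing recursion --- and you correctly observe that dropping the uniqueness requirement $\exists! y$ costs nothing, since witnessing extracts \emph{some} witness function regardless. The structure is sound and matches what the paper relies on.

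One slip worth correcting: in the $(\BSig{0}\comp)$ case you claim that bounded first-order quantifiers contribute ``only a polynomial-time exhaustive search over the bound.'' That is true only for \emph{sharply} bounded quantifiers $(Qx \leq |t|)$. A general bounded quantifier $(Qx \leq t)$ ranges over up to $2^{p(|\vec x|)}$ values (e.g.\ $t = x \# x$), so the exhaustive search is exponential-time; it is, however, polynomial-\emph{space}, since only the current candidate and a counter need be stored, and evaluating a $\BSig{0}$-formula is in general a PSPACE task (essentially QBF evaluation), not a polynomial-time one. This does not damage the conclusion --- the comprehended set's indicator is still PSPACE-decidable, which is all the argument needs --- but the complexity claim as written is wrong and should read ``polynomial-space exhaustive search.'' You may also want to note explicitly that the intermediate first-order witnesses produced along the $|x|$-fold $\pind$ iteration stay polynomially length-bounded because they are majorized by terms of $\LBA$ in the original inputs; this is what licenses reusing the workspace across compositions.
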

 
It is also known that the second order system axiomatized with the schema
$(\BSig{1} \ind)$, instead of $(\BSig{1} \pind)$, of 
the usual induction
$\varphi (0) \wedge
 \forall x \left( \varphi (x) \rightarrow  \varphi (\mrS (x))
           \right)
 \rightarrow \forall x \varphi (x)$
for $\BSig{1}$-formulas, called $\mV^1_2$, captures the exponential-time computable
functions of polynomial growth rate in the sense of Theorem
\ref{t:Buss86}.
Though there is no common notion about what is bounded arithmetic, the
exponential function $m \mapsto 2^m$ is not definable in any existing
system of bounded arithmetic.

\section{Minimal function graphs}
\label{s:mfg}

The {\em minimal function graph} semantics was described in
\cite{JonesM86} as denotational semantics,
cf. \cite[Chapter 9]{Winskel93}, and afterward used for termination analysis of
functional programs without exponential size-explosions
in \cite[Chapter 24.2]{Jones97} and \cite{Marion03}.
In this section, we explain how
minimal function graphs work, how they are defined inductively,
and how they can be defined without inductive definitions.

To see how minimal function graphs work, consider the program
$\RS_{\m{lcs}}$ in Example \ref{ex:LCS}.
Let us observe that the following reduction starting with the basic term
$\m{lcs} (\m{a} (\m{a} (\epsilon)),
         \m{b} (\m{b} (\epsilon))
     )
$
is possible.
\begin{equation*}
\begin{array}{rl}
 & \m{lcs} (\m{a} (\m{a} (\epsilon)),
         \m{b} (\m{b} (\epsilon))
	 ) \\
 \rewrite{\mathsf i}{}{\RS_{\m{lcs}}} &
   \m{max} (\m{lcs} (\m{a} (\epsilon), \m{b} (\m{b} (\epsilon))),
        \m{lcs} (\m{a} (\m{a} (\epsilon)), \m{b} (\epsilon))
   ) \\   
 \rewrite{\mathsf i}{}{\RS_{\m{lcs}}} &
   \m{max} (\m{lcs} (\m{a} (\epsilon), \m{b} (\m{b} (\epsilon))),
       \m{max}(	    
            \m{lcs} (\m{a} (\epsilon), \m{b} (\epsilon)),
            \m{lcs} (\m{a} (\m{a} (\epsilon)), \epsilon)
	  )  
   ) \\   
 \rewrite{\mathsf i}{}{\RS_{\m{lcs}}} &
  \m{max} (\m{max}(
            \m{lcs} (\epsilon, \m{b} (\m{b} (\epsilon))),
            \m{lcs} (\m{a} (\epsilon), \m{b} (\epsilon))
          ),
       \m{max}(	    
            \m{lcs} (\m{a} (\epsilon), \m{b} (\epsilon)),
            \m{lcs} (\m{a} (\m{a} (\epsilon)), \epsilon)
	  )  
      )
\end{array} 
\end{equation*}	 
In the reduction, the term
$t := \m{lcs} (\m{a} (\epsilon), \m{b} (\epsilon))$
is duplicated, and hence costly re-computations potentially occur.
For the same reason, there can be an exponential explosion in the size
of the reduction tree rooted at
$\m{lcs} (\m{a} (\m{a} (\epsilon)),
         \m{b} (\m{b} (\epsilon))
     )
$
that contains all the possible
rewriting sequences starting with the basic term.
A minimal function graph $G$, or {\em cache} in other words, is defined so that $G$ stores pairs of a basic term and its
normal form.
Thus, once the term $t$ is normalized to $\m{0}$
(because the two strings $\m{a}$ and $\m{b}$ have no common subsequence),
the pair $\seq{t, 0}$ is stored in $G$ and
any other reduction of $t$ can be simulated by replacing the
occurrence of $t$ with $\m{0}$.

Given a program $\RS$, a (variant of) minimal function graph $G$ is defined as the least
fixed point of the following operator $\Gamma$ over
$\mathcal P (\BF{} \times \TC{})$,
where $X \subseteq \BF{} \times \TC{}$.
\begin{equation*}
 \begin{array}{rcl}
 \seq{t, s} \in \Gamma (X) & :\Longleftrightarrow &
 \exists l \rightarrow r \in \RS,
 \exists \theta: \VS \rightarrow \TC{},
 \exists \seq{t_0, s_0}, \dots,
         \seq{t_{\size{r} -1}, s_{\size{r} -1}} \in X \text{ s.t.} \\
 &&
 t = l \theta \ \& \
 s = \big( (r \theta)[s_0/t_0] \cdots \big) [s_{\size{r} -1} / t_{\size{r} -1}]
 \end{array}
\end{equation*}
The operator $\Gamma$ is monotone, i.e.,
$X \subseteq Y \Rightarrow \Gamma (X) \subseteq \Gamma (Y)$, and hence
there exists the least fixed point of $\Gamma$.
Suppose that $\RS$ is quasi-reducible.
On one side, the fixed-ness of $G$ yields that
$t \rewnf s \Rightarrow \seq{t, s} \in G$.
On the other side, since the set
 $\{ \seq{t, s} \mid t \in \BF{} \ \& \ t \rewnf s \}$
 is a fixed point of $\Gamma$,
 the least-ness of $G$ yields that
 $\seq{t, s} \in G \Rightarrow t \rewnf s$.
 Thus, to conclude that every closed basic term has an (innermost)
 $\RS$-normal form, it suffices to show that, for every term
 $t \in \BF{}$, there exists a term $s$ such that
 $\seq{t, s} \in G$.
Now there are two important observations.
\begin{enumerate}
 \item It suffices to show that, for every term $t \in \BF{}$, there exist
a {\em subset} $G_t \subseteq G$ and a term $s$ such that $\seq{t, s} \in
       G_t$.
       If $t = l \theta$ and
       $s = \big( (r \theta)[s_0/t_0] \cdots \big) [s_{\size{r} -1} / t_{\size{r} -1}]$
       as in the definition of $\Gamma$ above and, for each $j < \size{r}$,
       $\seq{t_j , s_j} \in G_{t_j}$ holds for such a set
       $G_{t_j} \subseteq G$, then
       $G_t$ can be simply defined as
       $G_t = \{ \seq{t, s} \} \cup 
              G_{t_0} \cup \cdots \cup G_{t_{\size{r} -1}}
       $.%
       \footnote{To be precise, in \cite{Jones97,Marion03}, the {\em minimal function
       graph} was used to denote such a subset $G_t$ for a given basic $t$.}
 \item Additionally suppose that the program $\RS$ reduces under an LPO $<_{\lpo}$.
       Then it turns out that the definition of $\Gamma$ is equivalent to a
       form restricted in such a way that
       $t_j <_{\lpo} t$ for each $j < \size{r}$.%
       \footnote{Namely, every function computed by an $<_{\lpo}$-reducing
       program is defined recursively along $<_{\lpo}$.
       Therefore, as a reviewer pointed out, in this case the minimal function graphs can
       be regarded as fixed-point semantics for recursive
       definitions of functions, cf. \cite[Chapter 10]{SlonnegerK95}.}\label{rem:mfg}
\end{enumerate} 
For these reasons, the schema
$(\forall t \in \BF{})
 \left( (\forall s <_{\lpo} t) \varphi (s) \rightarrow \varphi (t)
 \right) \rightarrow
 (\forall t \in \BF{}) \varphi (t)$       
of transfinite induction along $<_{\lpo}$ will imply the termination of a
quasi-reducible LPO-terminating program $\RS$ in the sense above.

\section{Formalizing LPO-termination proofs under PQIs in $\mU^1_2$}
\label{s:termination}

In this section, we show that, if 
$\RS$ is a quasi-reducible $\LPOPoly$-program, then an innermost
$\RS$-normal form of any closed
basic term can be found in the system
$\mU^1_2$ (Theorem \ref{t:MFG}).

Given a program $\RS$ over a signature
$\FS = \CS \cup \DS$, we use the notation $V_{\RS}$ to denote the finite set
$\{ x \in \VS \mid
    x \text{ appears in some rule } \rho \in \RS
 \}$
of variables.
Let $\code{\cdot}$ be an {\em efficient} binary encoding for
$\mathbf T (\FS, \VS_{\RS})$-terms.
\label{p:code}
The efficiency means that:
\begin{enumerate}
\renewcommand{\theenumi}{(\roman{enumi})}
\renewcommand{\labelenumi}{(\roman{enumi})}
 \item $t \mapsto \code{t}$ is $\BSig{0}$-definable in $\mU^1_2$.\label{en:code:1}
 \item There exists a polynomial (term) $p(x)$ with a free variable $x$
       such that $|\code{t}| \leq p(\size{t})$ (provably) holds for any
       $t \in \mathbf T (\FS, \VS_{\RS})$.\label{en:code:2}
 \end{enumerate}
Without loss of generality, we can assume that:
 \begin{enumerate}
\renewcommand{\theenumi}{(\roman{enumi})}
\renewcommand{\labelenumi}{(\roman{enumi})}
  \setcounter{enumi}{2}
  \item $\size{t} \leq |\code{t}|$.\label{en:code:3}
  \item $|\code{s}| < |\code{t}|$ if
$s$ is a proper subterm of $t$.\label{en:code:4} 
 \end{enumerate}
Such an encoding can be defined, for example, by representing terms as
directed graphs not as trees.

\begin{lemma}
\label{l:lpoU}  
The relation $<_{\lpo}$ is 
$\BSig{0}$-definable in $\mU^1_2$.
 \end{lemma}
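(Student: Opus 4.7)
The plan is to express $s <_{\lpo} t$ by the existence of a polynomially-sized witnessing derivation, encoded as a natural number bounded by an $\LBA$-term. First I would observe that every premise of the three LPO rules is a pair $(s', t')$ of subterms of $s$ and $t$, and by the form of each rule, either $t'$ is a proper subterm of $t$ or $s'$ is a proper subterm of $s$. Property (iv) of the encoding then gives $|\code{s'}| + |\code{t'}| < |\code{s}| + |\code{t}|$, so the recursive unfolding of $s <_{\lpo} t$ involves at most $(1+|\code{s}|) \cdot (1+|\code{t}|)$ distinct subterm pairs, polynomial in $|\code{s}| + |\code{t}|$.

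Next, I would define an LPO-derivation to be a finite sequence of entries, each recording a subterm pair $(u, v)$, a rule index $i \in \{1, 2, 3\}$, and pointers to previous entries supplying the premises; the final entry is required to be $(s, t)$. Since both the number of entries and the size of each entry are polynomial in $|\code{s}| + |\code{t}|$, the encoded derivation $D$ can be bounded by a closed $\LBA$-term $B(\code{s}, \code{t})$ built from $\code{s}$, $\code{t}$, and a constant number of applications of the smash $\#$; indeed $\code{s} \# \code{t}$ already yields $2^{|\code{s}| \cdot |\code{t}|}$, so iterated smash suffices for any fixed polynomial exponent.

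Then I would write a formula $\mathrm{Valid}(D, s, t)$ using only sharply and bounded first-order quantifiers, asserting that each entry of $D$ is locally consistent with one of the three LPO rules based on its premise pointers, and that the final entry is $(s, t)$. Subterm extraction and code comparison are $\BSig{0}$-definable by property (i) of the encoding. The desired definition is $s <_{\lpo} t \iff \exists D \leq B(\code{s}, \code{t}) \, \mathrm{Valid}(D, s, t)$, which is bounded first-order and hence in $\BSig{0}$. The equivalence with the informal recursive definition of $<_{\lpo}$ is verified within $\mU^1_2$ by induction on $|\code{s}| + |\code{t}|$, available via $\BSig{1}$-PIND: the ``if'' direction is induction on certificate length, and the ``only if'' direction is by concatenating certificates for the premises.

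The main obstacle I expect is pinning down a concrete $\LBA$-term bound on the certificate. Since the derivation length is polynomial and each entry has polynomial size, this amounts to a routine use of iterated smash, but verifying the bound requires careful attention to the constant overhead in the encoding of derivations together with the pointer structure.
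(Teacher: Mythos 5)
Your proposal is correct and follows essentially the same route as the paper's proof sketch: both witness $s <_{\lpo} t$ by a derivation of size bounded by $\size{s}\cdot\size{t}$, bound its code by an $\LBA$-term (so the existential quantifier over derivations is bounded and the whole formula is $\BSig{0}$), and use encoding property (iv) to get the decreasing measure $|\code{s'}|+|\code{t'}| < |\code{s}|+|\code{t}|$ that makes the correctness/equivalence argument go through by $(\BSig{1}\pind)$. The only difference is presentational (a pointer sequence versus a derivation tree), which does not affect the argument.
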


%
\begin{proof}[Proof (Sketch)]
It suffices to show that, given two terms $s$ and $t$, the relation
``there exists a derivation tree according to the rules
\ref{d:lpo:1}--\ref{d:lpo:3} (on page \pageref{page:lpo}) that results in
$s <_{\lpo} t$'' is
$\BSig{0}$-definable in $\mU^1_2$.
Let $T$ denote such a derivation tree resulting in
$s <_{\lpo} t$.
By induction according to the inductive definition of $<_{\lpo}$
it can be shown that the number of nodes in $T$ is bounded by
$\size{s} \cdot \size{t}$.
Hence, by the assumption \ref{en:code:2} on the encoding $\code{\cdot}$,
the code $\code{T}$ of $T$ is polynomially bounded in
$\size{s} \cdot \size{t}$ and thus in
$\code{s} \cdot \code{t}$.
On the other hand, by definition, the relation
$s_0 <_{\lpo} t_0$ between two terms 
$s_0$ and $t_0$ is reduced to a tuple
$s_j <_{\lpo} t_j $ $(j = 1, \dots, k)$
of relations between some subterms
$s_1, \dots, s_k$ of $s_0$ and subterms
$t_1, \dots, t_k$ of $t_0$.
Thanks to the assumption \ref{en:code:4} on the encoding $\code{\cdot}$,
$|\code{s_j}| + |\code{t_j}| < |\code{s_0}| + |\code{t_0}|$,
i.e.,
 $2^{|\code{s_j}| + |\code{t_j}|} \leq
  \divtwo{\left( 2^{|\code{s_0}| + |\code{t_0}|} \right)}$,
holds for any $j \in \{ 1, \dots, k \}$.
From these observations, it can be seen that the construction of the
 derivation tree $T$ is performed in $\mU^1_2$, and hence the
 relation $s <_{\lpo} t$ is
$\BSig{0}$-definable in $\mU^1_2$.
\end{proof}

As observed in \cite{buch95},
in which an optimal LPO-termination proof was described,
every program $\RS$ reducing under an LPO $<_{\lpo}$
already reduces under a finite
restriction $<_{\ell}$ of $<_{\lpo}$ for some
$\ell \in \mathbb N$ and
every quantifier of the form $(Q s <_{\ell} t)$ can be regarded as a
bounded one.
Adopting the restriction, we introduce an even more restrictive relation
$<_{\ell}$ ($\ell \in \mathbb N$)
motivated by the following properties of PQIs. 
\begin{proposition}
 \label{p:PQI}
 Let $\qint{\cdot}$ be a kind $0$ PQI and
 $t \in \BF{}$.
 Then the following two properties hold.
 \begin{enumerate}
  \item $\qint{t} \leq p (|\code{t}|)$ holds for some polynomial $p$.\label{p:PQI:1}
  \item Suppose additionally that a program $\RS$ admits the PQI
	$\qint{\cdot}$ and that
	$t \rewast s$ holds.
        If $s \in \TC{}$, then
        $\size{s} \leq \qint{t}$ holds.
        If $s = f(s_1, \dots, s_k) \in \BF{}$, then
        $\size{s_j} \leq \qint{t}$
        holds for each $j \in \{ 1, \dots, k \}$.\label{p:PQI:2}
 \end{enumerate}
\end{proposition}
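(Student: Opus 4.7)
The plan is to prove the two claims separately, with (\ref{p:PQI:1}) relying only on the kind $0$ shape of $\qint{\cdot}$ on constructors together with polynomial boundedness on defined symbols, and (\ref{p:PQI:2}) resting on a single invariant: that the quasi-interpretation is non-increasing along reductions.

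For (\ref{p:PQI:1}), I would first bound $\qint{s}$ on constructor terms. Using the kind $0$ equation $\qint{c}(m_1,\dots,m_k) = d_c + \sum_j m_j$ with $d_c \geq 1$, a straightforward structural induction on $s \in \TC{}$ gives $\qint{s} \leq d \cdot \size{s}$, where $d = \max_{c \in \CS} d_c$. For a basic term $t = f(s_1,\dots,s_k) \in \BF{}$ with $s_j \in \TC{}$, polynomial boundedness of $\qint{f}$ supplies a polynomial $q_f$ with $\qint{f}(x_1,\dots,x_k) \leq q_f(x_1,\dots,x_k)$, so $\qint{t} = \qint{f}(\qint{s_1},\dots,\qint{s_k}) \leq q_f(d\size{s_1},\dots,d\size{s_k})$, which is polynomial in $\size{t}$. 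Composing with encoding assumption \ref{en:code:3} yields the desired polynomial in $|\code{t}|$.

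For (\ref{p:PQI:2}), the pivotal lemma is that a one-step reduction $t \rew s$ satisfies $\qint{s} \leq \qint{t}$. This I would establish by induction on the depth of the redex, using the quasi-interpretation inequality $\qint{r\theta} \leq \qint{l\theta}$ at the redex and the monotonicity property (ii) of $\qint{\cdot}$ to propagate the inequality through each enclosing layer of the context. A routine induction on the number of steps then lifts this invariant to $t \rewast s$. Next I would record the auxiliary bound $\size{s} \leq \qint{s}$ for every $s \in \TC{}$, again by structural induction: the base case uses $\qint{c} = d_c \geq 1 = \size{c}$, and the step gives $\qint{c(s_1,\dots,s_k)} = d_c + \sum_j \qint{s_j} \geq 1 + \sum_j \size{s_j} = \size{c(s_1,\dots,s_k)}$.

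Combining these ingredients settles both conclusions. If $s \in \TC{}$, then $\size{s} \leq \qint{s} \leq \qint{t}$ directly. If $s = f(s_1,\dots,s_k) \in \BF{}$, then each $s_j \in \TC{}$, and property (iii) of a PQI gives $\qint{s_j} \leq \qint{f}(\qint{s_1},\dots,\qint{s_k}) = \qint{s} \leq \qint{t}$, so $\size{s_j} \leq \qint{s_j} \leq \qint{t}$. No serious obstacle is anticipated; the only slightly delicate step is the monotone propagation of the reduction inequality through a possibly nested context, which nevertheless reduces to a trivial application of property (ii) at each layer.
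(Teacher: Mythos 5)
Your proposal is correct and follows essentially the same route as the paper: bound $\qint{s}$ by $d\cdot\size{s}$ on constructor terms and compose with the polynomial bound for the head symbol for part (\ref{p:PQI:1}), and for part (\ref{p:PQI:2}) chain $\size{s_j}\leq\qint{s_j}\leq\qint{s}\leq\qint{t}$ using subterm property (iii), monotonicity (ii), and the non-increase of $\qint{\cdot}$ along reductions. The paper's proof merely states these inequality chains without spelling out the supporting inductions you supply, so your version is just a more detailed rendering of the same argument.
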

 \begin{proof}
  {\sc Property} \ref{p:PQI:1}.
  Let $t = g(t_1, \dots, t_l)$.
  Since the PQI $\qint{\cdot}$ is kind $0$,
  one can find a constant $d$ depending only on the set
  $\CS$ of constructors and the PQI $\qint{\cdot}$ such that
  $\qint{t_j} \leq d \cdot \size{t_j}$
  holds
  for any $j \in \{ 1, \dots, l \}$.
  This yields a polynomial $p$ such that
  $\qint{t} \leq p (\size{t})$ and thus
  $\qint{t} \leq p(|\code{t}|)$ holds by the assumption \ref{en:code:3} on the encoding $\code{\cdot}$.

  {\sc Property} \ref{p:PQI:2}.
  In case $s \in \TC{}$,
  $\size{s} \leq \qint{s} \leq \qint{t}$ holds.
  In case $s = f(s_1, \dots, s_k) \in \BF{}$,
  $\size{s_j} \leq \qint{s_j} \leq \qint{s} \leq \qint{t}$
  holds for each $j \in \{ 1, \dots, k \}$.
 \end{proof}
\begin{definition}[$\TC{\ell}$, $\BF{\ell}$, $<_{\ell}$, $\lex{<_{\ell}}$]
\label{d:lpoell}  
Let
$\TC{\ell}$ denote a set
$\{ t \in \TC{} \mid \size{t} \leq \ell \}$
of constructor terms and $\BF{\ell}$ a set
$\{ f(t_1, \dots, t_k) \in \BF{} \mid 
    \size{t_1}, \dots, \size{t_k} \leq \ell
 \}$
of basic terms.
Then we write
$s <_{\ell} t$ if
$s <_{\lpo} t$ and additionally
$s \in \TC{\ell} \cup \BF{\ell}$ hold.
We use the notation
$s <_{\ell}^{\!\text{\tiny $\langle i \rangle$}} t$
($i = \ref{d:lpo:1}, \ref{d:lpo:2}, \ref{d:lpo:3}$)
 accordingly.
Moreover, we define a
 {\em lexicographic extension}
 $\lex{<_{\ell}}$ of $<_{\ell}$ over $\TC{}$.
For constructor terms $s_1, \dots, s_k$, $t_1, \dots, t_k$,
 we write
 $(s_1, \dots, s_k) \lex{<_{\ell}} (t_1, \dots, t_k)$
 if there exists an index $i \in \{ 1, \dots, k \}$ such that
 $s_j = t_j$ for every $j < i$, $s_i \dlpo{\ell}{1} t_i$, and
 $s_j \in \TC{\ell}$ for every $j > i$.
\end{definition}
Corollary \ref{c:lex} follows from the definitions of $<_{\ell}$ and
$\lex{<_{\ell}}$ and from $<_{\FS}$-minimality of constructors.
\begin{corollary}
 \label{c:lex}
 For two basic terms
 $f(s_1, \dots, s_k), f(t_1, \dots, t_k) \in \BF{\ell}$,
 $f(s_1, \dots, s_k) \dlpo{\ell}{3} f(t_1, \dots, t_k)$
 holds if and only if
 $(s_1, \dots, s_k) \lex{<_{\ell}} (t_1, \dots, t_k)$
 holds.
\end{corollary}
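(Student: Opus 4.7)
The plan is to prove the two directions of the biconditional separately by unpacking the inductive rules \ref{d:lpo:1}--\ref{d:lpo:3} that define $<_{\lpo}$, and then checking the side conditions on sizes so as to convert between $<_{\lpo}$ and $<_{\ell}$.

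For the forward direction, I would assume $f(s_1,\dots,s_k)\dlpo{\ell}{3} f(t_1,\dots,t_k)$ and unfold rule \ref{d:lpo:3}. This gives an index $i$ with $s_j=t_j$ for $j<i$, $s_i<_{\lpo} t_i$, and $s_j<_{\lpo} f(t_1,\dots,t_k)$ for $j>i$. Since $f(t_1,\dots,t_k)\in\BF{\ell}$, each $t_j$ lies in $\TC{\ell}$, so Corollary \ref{c:lpo} applied to $s_i<_{\lpo} t_i$ yields $s_i\dlpo{\lpo}{1} t_i$ together with $s_i\in\TC{}$. The size bound $\size{s_i}\leq\ell$ is furnished by $f(s_1,\dots,s_k)\in\BF{\ell}$, so in fact $s_i\dlpo{\ell}{1} t_i$. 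For $j>i$, the term $s_j$ is again a constructor subterm of the basic term $f(s_1,\dots,s_k)\in\BF{\ell}$, and hence $s_j\in\TC{\ell}$. This matches the definition of $(s_1,\dots,s_k)\lex{<_{\ell}}(t_1,\dots,t_k)$.

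For the backward direction, I would assume $(s_1,\dots,s_k)\lex{<_{\ell}}(t_1,\dots,t_k)$ with index $i$ and verify that the three premise schemas of rule \ref{d:lpo:3} are satisfied. The equalities $s_j=t_j$ for $j<i$ and the relation $s_i<_{\lpo} t_i$ (which comes for free from $s_i\dlpo{\ell}{1} t_i$) are immediate. The only nontrivial premise is $s_j<_{\lpo} f(t_1,\dots,t_k)$ for $j>i$; here I would proceed by a side induction on $\size{s_j}$. Writing $s_j=c(u_1,\dots,u_m)$ with $c\in\CS$, the $<_{\FS}$-minimality of constructors combined with $f\in\DS$ gives $c<_{\FS}f$, while the induction hypothesis yields $u_r<_{\lpo} f(t_1,\dots,t_k)$ for every $r\in\{1,\dots,m\}$; rule \ref{d:lpo:2} then produces $s_j<_{\lpo} f(t_1,\dots,t_k)$, with the base case $m=0$ degenerating to a single application of rule \ref{d:lpo:2} with no premises. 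Finally, since both $f(s_1,\dots,s_k)$ and $f(t_1,\dots,t_k)$ belong to $\BF{\ell}$ by hypothesis, the resulting derivation is a witness of $\dlpo{\ell}{3}$.

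The main obstacle I anticipate is the $s_j<_{\lpo} f(t_1,\dots,t_k)$ step in the backward direction: the definition of $\lex{<_{\ell}}$ only records the purely syntactic condition $s_j\in\TC{\ell}$, so the equivalence hinges on the fact that one can always embed a closed constructor term below a defined-headed term through iterated applications of rule \ref{d:lpo:2}. Apart from this structural induction, the rest of the argument reduces to careful bookkeeping of the $\size{\cdot}\leq\ell$ bounds when passing between $<_{\lpo}$, $<_{\ell}$, and their superscripted variants.
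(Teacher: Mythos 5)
Your proof is correct and matches the paper's intent exactly: the paper offers no written proof, saying only that the corollary ``follows from the definitions of $<_{\ell}$ and $\lex{<_{\ell}}$ and from $<_{\FS}$-minimality of constructors,'' and your argument is a faithful elaboration of precisely those ingredients (Corollary \ref{c:lpo} plus the size bookkeeping for the forward direction, and the structural induction via rule \ref{d:lpo:2} with $c <_{\FS} f$ for the premises $s_j <_{\lpo} f(t_1,\dots,t_k)$, $j>i$, in the backward direction). No gaps.
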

For most of interesting $\LPOPoly$-programs including Example \ref{ex:LCS} and \ref{ex:QBF}, interpreting polynomials consist of
$+$, $\cdot$, $\max_{j=1}^k x_j$ together with additional constants.
This motivates us to formalize PQIs limiting interpreting polynomial
terms to those built up only from $0$, $\mrS$, $+$, $\cdot$ and $\max$
to make the formalization easier.
Then the constraints (ii) and (iii) on PQIs follow from defining axioms
for these function symbols.

Let us consider a reduction $t_0 \rewast t \rewast s$ under a program
$\RS$ admitting a kind $0$ PQI $\qint{\cdot}$, where
$t_0, t \in \BF{}$ and $s \in \TC{} \cup \BF{}$.
If $s <_{\lpo} t$ for some LPO $<_{\lpo}$, then
Proposition \ref{p:PQI} yields a polynomial $p$ such that  
$s <_{p (|\code{t_0}|)} t$ holds by Definition \ref{d:lpoell}.
Hence we can assume that $\ell$ is (the result of substituting $t_0$ for) a polynomial
$p(|x|)$.
More precisely, $\ell$ can be expressed by an 
$\LBA$-term built up from $0$ and $|x|, |y|, |z|, \dots$ by
$\mathrm S$, $+$ and ${}\cdot{}$.
By assumption, $\ell$ does not contain $\#$ nor $\divtwo{\cdot}$.
Thus $\ell = \ell (x_1, \dots, x_k)$ denotes a polynomial with
 non-negative coefficients in
$|x_1|, \dots, |x_k|$. 
Since $\ell$ contains no smash $\#$ in particular, $2^{p(\ell)}$ can be
 regarded as an $\LBA$-term for any polynomial $p(x)$.
By the assumption \ref{en:code:2} on the encoding $\code{\cdot}$,
$|\code{t}|$ is polynomially bounded in
the size $\size{t}$ of $t$, and hence
$\code{t} \leq 2^{p(\size{t})}$ for some polynomial $p(x)$.
Therefore any quantifier of the forms
$(Q s <_{\ell} t)$, $(Q t \in \TC{\ell})$ and 
$(Q t \in \BF{\ell})$ can be treated as a bounded one.

We deduce the schema
$\eqref{e:TI:B}$ of $<_{\ell}$-transfinite induction over $\BF{\ell}$ for
 $\BSig{1}$-formulas (Lemma \ref{l:TI:B}).
 Since the relation
 $f(s_1, \dots, s_k) \dlpo{\ell}{3} f(t_1, \dots, t_k)$
 relies on the comparison
 $(s_1, \dots, s_k) \lex{<_{\ell}}
 (t_1, \dots, t_k)$
 by Corollary \ref{c:lex},
  we previously have to deduce the schema
  $\eqref{e:TI:k}$ of $\lex{<_{\ell}}$-transfinite induction over
  $k$-tuples of $\TC{\ell}$-terms (Lemma \ref{l:TI:k}).
We start with deducing the instance in the base case $k = 1$.

\begin{lemma}
\label{l:TI:C}  
The following schema of $<_{\ell}$-transfinite induction over
$\TC{\ell}$ holds in $\mU^1_2$, where
$\varphi \in \BSig{1}$.
\begin{equation}
\tag{$\m{TI}_{\BSig{1}} \left( \TC{\ell}, <_{\ell} \right)$}
\label{e:TI}
(\forall t \in \TC{\ell})
\big( (\forall s <_{\ell} t) \varphi (s)
      \rightarrow \varphi (t)
\big)
\rightarrow
(\forall t \in \TC{\ell}) \varphi (t)
\quad
\end{equation}
\end{lemma}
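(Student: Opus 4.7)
The plan is to reduce \eqref{e:TI} to ordinary induction on the size of constructor terms, and then to obtain that size induction from the $(\BSig{1}\pind)$ axiom of $\mU^1_2$. The first reduction rests on Corollary \ref{c:lpo}: iterating the corollary, any $s <_{\lpo} t$ with $t \in \TC{}$ is $\leqslant_{\lpo}$-below some immediate subterm of $t$ and hence is itself a proper subterm of $t$, so $s <_{\ell} t$ together with $t \in \TC{\ell}$ implies $\size{s} < \size{t}$. Since derivations under the LPO rules are polynomially bounded (as exploited in the proof of Lemma \ref{l:lpoU}), this iteration is formalizable via a $\BSig{0}$-induction available in $\mU^1_2$; from here only the consequence $\size{s} < \size{t}$ will be used.

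Assuming the premise of \eqref{e:TI}, introduce the auxiliary formula
\[
 \psi(n) \;:\equiv\; (\forall t \in \TC{\ell})\bigl(\size{t} \leq n \rightarrow \varphi(t)\bigr).
\]
Since $\forall t \in \TC{\ell}$ counts as a bounded first-order quantifier (its domain being bounded by a closed $\LBA$-term because $\ell$ is polynomial in $|\vec x|$), the formula $\psi$ remains in $\BSig{1}$. I will show $\psi(\ell)$; since every $t \in \TC{\ell}$ satisfies $\size{t} \leq \ell$, this yields the desired conclusion $(\forall t \in \TC{\ell})\varphi(t)$.

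The step $\psi(n) \rightarrow \psi(n+1)$ is immediate: given $t \in \TC{\ell}$ with $\size{t} \leq n+1$, every $s <_{\ell} t$ satisfies $\size{s} < n+1$ by the reduction above, so $\varphi(s)$ follows from $\psi(n)$, and the premise of \eqref{e:TI} delivers $\varphi(t)$. The base case $\psi(0)$ is vacuous. The main obstacle is to lift this step to $\psi(\ell)$ using only $(\BSig{1}\pind)$, which halves the induction variable rather than decrementing it. The standard device is to apply $(\BSig{1}\pind)$ to the formula $\chi(m) :\equiv \psi(|m|)$, which is again $\BSig{1}$; the PIND step $\chi(\divtwo{m}) \rightarrow \chi(m)$ unfolds for $m \geq 1$ to the ordinary step $\psi(|m|-1) \rightarrow \psi(|m|)$ established above. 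Instantiating $m$ with a closed $\LBA$-term $T$ of bit-length at least $\ell$, available via iterated smash because $\ell$ is polynomial in $|\vec x|$, and using the evident monotonicity $\psi(|T|) \rightarrow \psi(\ell)$, one obtains $\psi(\ell)$ and completes the proof.
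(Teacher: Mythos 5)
Your proof is correct and follows essentially the same route as the paper: both arguments use Corollary \ref{c:lpo} to reduce $s <_{\ell} t$ for constructor $t$ to the proper-subterm relation, and then run a $(\BSig{1}\pind)$-based induction on a measure that strictly decreases on proper subterms. The only difference is cosmetic --- the paper inducts on the code $\code{t}$ (packaged, as its following Remark explains, as a bit-wise course-of-values induction on $|\code{t}|$ via exactly the $\psi(|m|)$-style device you use), whereas you induct on $\size{t}$ up to $\ell$.
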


\begin{proof}
Reason in $\mU^1_2$.
Suppose 
$(\forall t \in \TC{\ell})
\big( (\forall s <_{\ell} t) \varphi (s)
      \rightarrow \varphi (t)
\big)
$
and let 
$t \in \TC{\ell}$.
We show that $\varphi (t)$ holds by
$(\BSig{1} \pind)$ on $\code{t}$.
The case $\code{t} = 0$ trivially holds.
Suppose $\code{t} >0$ for induction step.
By assumption, it suffices to show that
$\varphi (s)$ holds for any $s <_{\ell} t$.
Thus let $s <_{\ell} t$.
 Since $t \in \TC{\ell}$, $s$ is a proper subterm of $t$ by Corollary
 \ref{c:lpo} and $<_{\FS}$-minimality of constructors.
Thus, the assumption \ref{en:code:4} on the encoding $\code{\cdot}$ yields
$\code{s} \leq \divtwo{\code{t}}$, and hence
$\varphi (s)$ holds by induction hypothesis.
\end{proof}

\begin{remark}
 In the proof of Lemma \ref{l:TI:C}, we employed a bit-wise form of
 {\em course of values} induction
$\varphi (0) \wedge
 \forall t
 \big( \forall s (\code{s} \leq \divtwo{\code{t}} \rightarrow
                  \varphi (s)
                 )
       \rightarrow \varphi (t) 
 \big)
 \rightarrow \forall t \varphi (t)$
 for a $\BSig{1}$-formula $\varphi (x)$, which is not an instance of
 $(\BSig{1} \pind)$.
 Formally, one should apply $(\BSig{1} \pind)$ for the $\BSig{1}$-formula
 $\psi (x) \equiv \forall t
 \big( \code{t} \leq 2^{|x|} \rightarrow \varphi (t)
 \big)$
 to deduce
 $\left( \forall t \in \TC{\ell} \right) \varphi (t)$.
To ease presentation, we will use similar informal arguments in the sequel.
\end{remark}
   
\begin{lemma}
\label{l:TI:k}
The schema {\normalfont (\ref{e:TI})} can be extended to tuples of
$\TC{\ell}$-terms, i.e., 
the following schema holds in $\mU^1_2$, where
$\varphi (\vec t) \equiv
 \varphi (t_1, \dots, t_k) \in \BSig{1}$.
\begin{equation}
\tag{$\m{TI}_{\BSig{1}} ( \TC{\ell}^k, \lex{<_{\ell}} )$}
\label{e:TI:k}
(\forall \vec t \in \TC{\ell})
\big( (\forall \vec s \lex{<_{\ell}} \vec t) \varphi (\vec s)
      \rightarrow \varphi (\vec t)
\big)
\rightarrow
(\forall \vec t \in \TC{\ell}) \varphi (\vec t)
\end{equation}
\end{lemma}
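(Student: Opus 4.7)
The plan is to proceed by external induction on the arity $k$, using Lemma \ref{l:TI:C} both as the base case ($k = 1$) and as the tool that handles the added coordinate at each step. The crucial move in the inductive step $k \Rightarrow k+1$ is to consolidate the last coordinate: given $\varphi(t_1, \dots, t_{k+1}) \in \BSig{1}$ satisfying the hypothesis of \eqref{e:TI:k}, I would set
\[
\psi(\vec t) \;\equiv\; (\forall t_{k+1} \in \TC{\ell}) \, \varphi(\vec t, t_{k+1})
\]
and apply the inductive hypothesis to $\psi$. The newly introduced quantifier is bounded (by the discussion just before Definition \ref{d:lpoell}, where it is noted that $(Q t \in \TC{\ell})$ can be treated as a bounded first-order quantifier) and is first-order, so since first-order quantifiers are ignored in the $\BSig{i}$-hierarchy, $\psi$ is again in $\BSig{1}$ and eligible for \eqref{e:TI:k} at level~$k$.

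The heart of the argument is then to verify the progress hypothesis of \eqref{e:TI:k} for $\psi$. Reasoning in $\mU^1_2$, I would fix $\vec t \in \TC{\ell}$, assume $(\forall \vec s \lex{<_\ell} \vec t) \, \psi(\vec s)$, and show $\psi(\vec t)$ by applying Lemma \ref{l:TI:C} to the $\BSig{1}$-formula $\chi(t_{k+1}) \equiv \varphi(\vec t, t_{k+1})$ (treating $\vec t$ as a parameter). To discharge the premise of \eqref{e:TI}, fix $t_{k+1} \in \TC{\ell}$ and assume $(\forall s_{k+1} <_\ell t_{k+1}) \, \varphi(\vec t, s_{k+1})$; then the conclusion $\varphi(\vec t, t_{k+1})$ follows from the original progress hypothesis once I verify $\varphi(\vec s, s_{k+1})$ for every $(\vec s, s_{k+1}) \lex{<_\ell} (\vec t, t_{k+1})$. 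By inspection of Definition \ref{d:lpoell}, such a tuple splits into two cases: either the first differing index is at most $k$, in which case $\vec s \lex{<_\ell} \vec t$ and $s_{k+1} \in \TC{\ell}$ so $\varphi(\vec s, s_{k+1})$ follows from $\psi(\vec s)$; or $\vec s = \vec t$ and $s_{k+1} <_\ell t_{k+1}$, so $\varphi(\vec s, s_{k+1})$ follows from the assumption introduced for $\chi$. Thus Lemma \ref{l:TI:C} yields $\psi(\vec t)$, and the inductive hypothesis then delivers $(\forall \vec t \in \TC{\ell}) \, \psi(\vec t)$, which is the desired conclusion for $(k+1)$-tuples.

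The main obstacle, as I see it, is not computational but a classification check: one must make sure that forming $\psi$ from $\varphi$ genuinely stays inside $\BSig{1}$ (otherwise the inductive hypothesis would not be applicable), and that throughout the reduction every quantifier over $\TC{\ell}$ remains uniformly bounded by an $\LBA$-term polynomial in the codes of the parameters. Both points are already secured by the remarks preceding Definition \ref{d:lpoell}, but they need to be invoked explicitly at each consolidation step. Once these bookkeeping issues are settled, the remainder is a routine two-level application of Lemma \ref{l:TI:C}.
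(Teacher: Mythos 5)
Your proposal is correct and takes essentially the same route as the paper's proof: a meta-induction on $k$ that reduces the lexicographic schema to Lemma \ref{l:TI:C} plus the inductive hypothesis, via an auxiliary $\BSig{1}$-formula $\psi$ obtained by universally quantifying over $\TC{\ell}$ the coordinates handled by the inner induction, with the same two-case split on where the lexicographic decrease occurs and the same appeal to closure of $\BSig{1}$ under bounded first-order quantification. The only difference is cosmetic: you peel off the last (least significant) coordinate and give it to Lemma \ref{l:TI:C}, whereas the paper peels off the first coordinate and lets the inductive hypothesis handle the remaining $k-1$.
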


\begin{proof}
We show that the schema (\ref{e:TI:k}) holds in 
$\mU^1_2$ by (meta) induction on $k \geq 1$.
In case $k=1$, the schema is an instance of 
(\ref{e:TI}).
Suppose that $k > 1$ and 
($\m{TI}_{\BSig{1}} ( \TC{\ell}^{k-1}, \lex{<_{\ell}} )$)
holds by induction hypothesis.
Assume that
\begin{equation}
\label{e:prog}
(\forall t_1, \dots, t_k \in \TC{\ell})
\big( (\forall (s_1, \dots, s_k) \lex{<_{\ell}} (t_1, \dots, t_k)
      ) \varphi (s_1, \dots, s_k)
\rightarrow \varphi (t_1, \dots, t_k) 
\big)
\end{equation}
holds for some $\BSig{1}$-formula
$\varphi (t_1, \dots, t_k)$.
Let
$\varphi_{\lex{<_{\ell}}} (t, t_2, \dots, t_k)$,
$\psi (t)$ and
$\psi_{<_{\ell}} (t)$
denote $\BSig{1}$-formulas specified as follows.
\begin{equation*}
\begin{array}{rcl}
\varphi_{\lex{<_{\ell}}} (t, t_2, \dots, t_k) &:\equiv&
t_2, \dots, t_k \in \TC{\ell} \wedge 
\left( \forall (s_2, \dots, s_k) \lex{<_{\ell}}
       (t_2, \dots, t_k) 
\right) \varphi (t, s_2, \dots, s_k); \\
\psi (t) &:\equiv&
\left( \forall t_2, \dots, t_k \in \TC{\ell} \right)
\varphi (t, t_2, \dots, t_k); \\
\psi_{<_{\ell}} (t) &:\equiv&
t \in \TC{\ell} \wedge (\forall s <_{\ell} t) \psi (s).
\end{array}
\end{equation*}
Note, in particular, that
$\psi (t)$ is still a $\BSig{1}$-formula
since every quantifier of the form
$(\forall s \in \TC{\ell})$
can be regarded as a bounded one under which the class
$\BSig{1}$ is closed. 
One can see that
$\varphi_{\lex{<_{\ell}}} (t, t_2, \dots, t_k)$ and
$\psi_{<_{\ell}} (t)$
imply
$t, t_2, \dots, t_k \in \TC{\ell}$
and
$\big( \forall (s, s_2, \dots, s_k) \lex{<_{\ell}}
        (t, t_2, \dots, t_k)
 \big)
        \varphi (s, s_2, \dots, s_k)
$.
Hence, by the assumption \eqref{e:prog},
$\psi_{<_{\ell}} (t)$
implies
$(\forall t_2, \dots, t_k \in \TC{\ell})
 \big( \varphi_{\lex{<_{\ell}}} (t, t_2, \dots, t_k) \rightarrow
       \varphi (t, t_2, \dots, t_k)
 \big)$,
which denotes
\begin{equation*}
\label{e:prog:k-1}
(\forall t_2, \dots, t_k \in \TC{\ell})
\big( (\forall (s_2, \dots, s_k) \lex{<_{\ell}} (t_2, \dots, t_k)
      ) \varphi (t, s_1, \dots, s_k)
\rightarrow \varphi (t, t_2, \dots, t_k) 
\big).
\end{equation*}
This together with
($\m{TI}_{\BSig{1}} ( \TC{\ell}^{k-1}, \lex{<_{\ell}} )$)
yields
$\left( \forall t_2, \dots, t_k \in \TC{\ell} \right)
 \varphi (t, t_2, \dots, t_k)$,
denoting 
$\psi (t)$.
This means that
$(\forall t \in \TC{\ell})
 \big( (\forall s <_{\ell} t) \psi (s) \rightarrow \psi (t)
 \big)$ 
holds.
Since $\psi (t) \in \BSig{1}$ as noted above,
this together with (\ref{e:TI}) yields
$(\forall t \in \TC{\ell}) \psi (t)$
and thus
$(\forall t_1, \dots, t_k \in \TC{\ell})
 \varphi (t_1, \dots, t_k)$
holds.
\end{proof}

\begin{lemma}
\label{l:TI:B}
Let $\FS = \CS \cup \DS$.
The $<_{\ell}$-transfinite induction over
$\BF{\ell}$ holds in $\mU^1_2$, where
$\varphi \in \BSig{1}$.
\begin{equation}
\tag{$\m{TI}_{\BSig{1}} \left( \BF{\ell}, <_{\ell} \right)$}
\label{e:TI:B}
(\forall t \in \BF{\ell})
\big( (\forall s \in \BF{\ell})
      (s <_{\ell} t \rightarrow \varphi (s))
      \rightarrow \varphi (t)
\big)
\rightarrow
(\forall t \in \BF{\ell}) \varphi (t)
\quad
\end{equation}
\end{lemma}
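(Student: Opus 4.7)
The plan is to reduce $<_{\ell}$-transfinite induction over $\BF{\ell}$ to two ingredients: a meta-level induction on the finite precedence $(\DS, <_{\FS})$, and the lexicographic induction supplied by Lemma \ref{l:TI:k}. Reasoning in $\mU^1_2$, I fix $\varphi \in \BSig{1}$ and assume the progressiveness hypothesis of (\ref{e:TI:B}). Every element of $\BF{\ell}$ has the shape $f(t_1, \dots, t_k)$ with $f \in \DS$ and $t_1, \dots, t_k \in \TC{\ell}$, so it suffices, for each $f \in \DS$, to derive $(\forall t_1, \dots, t_k \in \TC{\ell})\, \varphi(f(t_1, \dots, t_k))$.

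I carry out a meta-induction on $f$ along $(\DS, <_{\FS})$, which is justified externally since $\DS$ is a fixed finite set. Fix such an $f$ of arity $k$ and assume as outer induction hypothesis that $\varphi(g(\vec s))$ has already been established for every $g <_{\FS} f$ and every tuple $\vec s$ of $\TC{\ell}$-terms of length equal to the arity of $g$. Set $\psi(\vec t) :\equiv \varphi(f(\vec t))$; this is still a $\BSig{1}$-formula, because term formation is $\BSig{0}$-definable by \ref{en:code:1} so nesting $\varphi$ with the construction $f(\vec t)$ does not raise its complexity. I then apply Lemma \ref{l:TI:k} to $\psi$, which reduces the goal to verifying the progressiveness of $\psi$ along $\lex{<_{\ell}}$ on $\TC{\ell}^k$.

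To verify that progressiveness, I take $\vec t \in \TC{\ell}^k$ and assume $\psi(\vec s)$ for all $\vec s \lex{<_{\ell}} \vec t$. The outer progressiveness of $\varphi$ reduces $\psi(\vec t)$ to showing $\varphi(s)$ for every $s \in \BF{\ell}$ with $s <_{\ell} f(\vec t)$; write such an $s$ as $g(\vec s)$. Case \ref{d:lpo:1} cannot arise: it would require $s \leqslant_{\lpo} t_i$ for some constructor-term argument $t_i$ of $f(\vec t)$, but Corollary \ref{c:lpo} together with $<_{\FS}$-minimality of constructors would then force $s \in \TC{}$, contradicting $s \in \BF{\ell}$. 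In case \ref{d:lpo:2} one has $g <_{\FS} f$, and the outer meta-induction hypothesis supplies $\varphi(s)$ directly. In case \ref{d:lpo:3} one has $g = f$, and Corollary \ref{c:lex} yields $\vec s \lex{<_{\ell}} \vec t$, so that $\varphi(s)$ coincides with $\psi(\vec s)$ and holds by the inner hypothesis.

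The main delicate point is the bookkeeping of the meta-induction on $<_{\FS}$: strictly speaking the lemma is a proof scheme, and by enumerating $\DS$ as $f_1, \dots, f_n$ in some $<_{\FS}$-increasing order, the derivation of (\ref{e:TI:B}) inside $\mU^1_2$ becomes a composition of $n$ applications of Lemma \ref{l:TI:k}, one per head symbol. Everything else is routine; the non-trivial step that \emph{cannot} be formalised naively (namely the descent through the lexicographic status on arguments) has already been absorbed into Lemma \ref{l:TI:k}, and what remains here is the purely finite outer recursion on the precedence together with the case analysis on how $s <_{\ell} f(\vec t)$ can be derived.
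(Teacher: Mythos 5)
Your proof is correct and takes essentially the same route as the paper's: exclusion of the case $s \dlpo{\ell}{1} f(\vec t)$ via Corollary \ref{c:lpo}, an induction on the finite precedence for the case $s \dlpo{\ell}{2} f(\vec t)$, and Corollary \ref{c:lex} combined with Lemma \ref{l:TI:k} for the case $s \dlpo{\ell}{3} f(\vec t)$. The only (immaterial) organizational difference is that the paper runs the precedence induction internally as $(\BSig{1} \pind)$ on $2^{\rk (g)}$ and invokes Lemma \ref{l:TI:k} inside the case analysis for a fixed term, whereas you unwind the precedence externally and apply Lemma \ref{l:TI:k} once per defined symbol.
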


Given a precedence $<_{\FS}$ on the finite signature $\FS$, let
$\rk: \FS \rightarrow \mathbb N$
denote the {\em rank}, a finite function compatible with $<_{\FS}$:
$\rk (f) < \rk (g) \Leftrightarrow f <_{\FS} g$.

\begin{proof}
Reason in $\mU^1_2$.
Assume the premise of $\eqref{e:TI:B}$:
\begin{equation}
\label{e:l:prog}
(\forall t \in \BF{\ell})
\big( (\forall s \in \BF{\ell})
      (s <_{\ell} t \rightarrow \varphi (s))
      \rightarrow \varphi (t)
\big)
\end{equation}
Let $g \in \DS$.
We show that
$(\forall t_1, \dots, t_l \in \TC{\ell})
 \varphi (g (t_1, \dots, t_l))$
holds by $(\BSig{1} \pind)$ on 
$2^{\rk (g)}$, or in other words by finitary induction on $\rk (g)$.
Let $t_1, \dots, t_l \in \TC{\ell}$ and
$t := g(t_1, \dots, t_l)$.
By the assumption \eqref{e:l:prog}, it suffices to show that
$\varphi (s)$ holds for any
$s \in \BF{\ell}$ such that $s <_{\ell} t$.
Thus, let $s \in \BF{\ell}$ and $s <_{\ell} t$.

{\sc Case.} 
$s \dlpo{\ell}{1} t$:
In this case
$s \leqslant_{\ell} t_i$ for some $i \in \{ 1, \dots, l \}$.
Since $t_i \in \TC{\ell}$, $s \in \TC{\ell}$ as well by Corollary
\ref{c:lpo}, and hence this case is excluded.

{\sc Case.}
$s := f(s_1, \dots, s_k) \dlpo{\ell}{2} t$:
In this case, $f <_{\FS} g$ and hence $\rk (f) < \rk (g)$.
This allows us to reason as
$2^{\rk (g)} \leq
 2^{\rk (f) -1} =
 \divtwo{2^{\rk (f)}}$.
Thus the induction hypothesis yields $\varphi (s)$.

{\sc Case.}
$s := g(s_1, \dots, s_l) \dlpo{\ell}{3} t$:
We show that the following condition holds.
\begin{equation}
\label{e:l:prog:l}
(\forall v_1, \dots, v_l \in \TC{\ell})
\big( (\forall (u_1, \dots, u_l) \lex{<_{\ell}} (v_1, \dots, v_l)
      ) \varphi (g(u_1, \dots, u_l))
\rightarrow \varphi (g(v_1, \dots, v_l)) 
\big)
\end{equation}
Let $v_1, \dots, v_l \in \TC{\ell}$.
By Corollary \ref{c:lex}, the premise
$(\forall (u_1, \dots, u_l) \lex{<_{\ell}} (v_1, \dots, v_l))
 \varphi (g(u_1, \dots, u_l)$
of \eqref{e:l:prog:l} yields
$\left( \forall s'  \dlpo{\ell}{3} g(v_1, \dots, v_l)
 \right) \varphi (s')
$.
On the other side, the previous two cases yield
$(\forall s' \in \BF{\ell})
 \big(s' <_{\ell}^{\!\text{\tiny $\langle i \rangle$}}
      g(v_1, \dots, v_l)
 \rightarrow \varphi (s')
 \big)$
$(i = \ref{d:lpo:1}, \ref{d:lpo:2})$
and hence
$(\forall s' \in \BF{\ell})
 \big(s' <_{\ell} g(v_1, \dots, v_l)
 \rightarrow \varphi (s')
 \big)$
holds. 
Therefore $\varphi (g(v_1, \dots, v_l))$ holds by the assumption
\eqref{e:l:prog}, yielding the statement \eqref{e:l:prog:l}.
Since \eqref{e:l:prog:l} is the premise of an instance of the schema
$(\m{TI}_{\BSig{1}} ( \TC{\ell}^{l}, \lex{<_{\ell}} ))$,
Lemma \ref{l:TI:k} yields
$(\forall v_1, \dots, v_l \in \TC{\ell})
 \varphi (g(v_1, \dots, v_l))$,
and thus
$\varphi (g(s_1, \dots, s_l))$ holds in particular.
\end{proof}

To derive, from
$\eqref{e:TI:B}$, the existence of a minimal function graph under an
LPO-terminating program, we need the following technical lemma.
\begin{lemma}
\label{l:sub}
{\normalfont (in $\mU^1_2$)}
Let $\qint{\cdot}$ be a kind $0$ PQI for a signature
$\FS = \CS \cup \DS$,
$t \in \mathbf{B(F)}$, $s \in \mathbf{T(F)}$
and $<_{\lpo}$ an LPO induced by a precedence 
$<_{\FS}$.
If
$s <_{\lpo} t$ and
$\qint{s} \leq \qint{t} \leq \ell$,
then, for any basic subterm $t'$ of $s$ and for any
$s' \in \mathbf{T(C)}$ such that
$\qint{s'} \leq \qint{t'}$,
$v <_{\ell} t$
holds for any basic subterm $v$ of $s[s'/t']$.
\end{lemma}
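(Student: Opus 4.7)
The plan is to split the goal $v <_{\ell} t$ into the size constraint $v \in \BF{\ell}$ and the order constraint $v <_{\lpo} t$, and to handle each via a case analysis on where the basic subterm $v$ sits in $s[s'/t']$. Throughout I write $v = g(v_1, \dots, v_k)$ with $g \in \DS$ and $v_1, \dots, v_k \in \TC{}$.

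For the size constraint I would argue by monotonicity of the quasi-interpretation. Since $\qint{s'} \leq \qint{t'}$, clause~(ii) iterated through the contexts of $s$ gives $\qint{s[s'/t']} \leq \qint{s}$, while clause~(iii) applied to $v$ as a subterm of $s[s'/t']$ gives $\qint{v} \leq \qint{s[s'/t']}$, so $\qint{v} \leq \qint{t} \leq \ell$. Because the PQI is kind~$0$, a routine induction shows $\size{u} \leq \qint{u}$ for every $u \in \TC{}$, so each $\size{v_j} \leq \qint{v_j} \leq \ell$ and hence $v \in \BF{\ell}$.

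For the order constraint let $p$ denote the position in $s$ at which $t'$ is substituted. If $v$ sits in $s[s'/t']$ at a position incomparable with $p$, then $v$ is a proper basic subterm of $s$ and $v <_{\lpo} s <_{\lpo} t$ follows from the subterm property and transitivity of $<_{\lpo}$. Otherwise $v = w[s'/t']$ for some subterm $w$ of $s$ that contains the substituted occurrence of $t'$ and shares its head $g$ with $v$, and $w \leqslant_{\lpo} s$ yields $w <_{\lpo} t$. Since $w$ has defined head $g$ while every argument of $t$ lies in $\TC{}$, Corollary~\ref{c:lpo} excludes rule~\ref{d:lpo:1}, so the derivation of $w <_{\lpo} t$ comes from either rule~\ref{d:lpo:2} (with $g <_{\FS} g_t$ for the head $g_t$ of $t$) or rule~\ref{d:lpo:3} (with $g = g_t$). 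In the former I conclude $v <_{\lpo} t$ again by rule~\ref{d:lpo:2}, applying to each constructor $v_j$ the auxiliary structural fact, proved by induction on constructor terms using $<_{\FS}$-minimality of constructors, that every $\TC{}$-term is $<_{\lpo}$-below every term with a defined head on top. In the latter the lexicographic premise $(w_1, \dots, w_k) \lex{<_{\lpo}} (t_1, \dots, t_k)$ transfers verbatim to $(v_1, \dots, v_k)$: for the leading indices $j < i$ the equality $w_j = t_j \in \TC{}$ forbids $t'$ from appearing in $w_j$ and yields $v_j = w_j$; the strict step $w_i <_{\lpo} t_i \in \TC{}$ forces $w_i \in \TC{}$ by Corollary~\ref{c:lpo}, so $v_i = w_i \dlpo{\lpo}{1} t_i$; and the tail $v_j$ for $j > i$ lie in $\TC{\ell}$ by the size-constraint part. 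Corollary~\ref{c:lex} then yields $v <_{\ell} t$.

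The main obstacle is not this combinatorial case analysis but verifying that the auxiliary LPO-theoretic facts used --- the subterm property, transitivity of $<_{\lpo}$, and the constructor-below-defined-head lemma --- are all accessible in $\mU^1_2$. The polynomial derivation-tree bound underlying Lemma~\ref{l:lpoU} is precisely what makes each such fact formalizable by a routine induction on LPO-derivations with $(\BSig{1} \pind)$, so I expect no essential difficulty in carrying the plan through.
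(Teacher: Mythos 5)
Your proof is correct, and its overall decomposition is the paper's: the size constraint $v \in \BF{\ell}$ is obtained from the chain $\size{v_i} \leq \qint{v_i} \leq \qint{v} \leq \qint{s[s'/t']} \leq \qint{s} \leq \qint{t} \leq \ell$ using monotonicity, the subterm property of quasi-interpretations and the kind-$0$ condition, exactly as in the paper. The difference lies entirely in how the order constraint $v <_{\lpo} t$ is discharged. The paper does it in three sentences by citing standard meta-properties of LPOs wholesale: $<_{\FS}$-minimality of constructors gives $s' <_{\lpo} t'$, closure under contexts together with transitivity gives $s[s'/t'] <_{\lpo} t$, and the subterm property then gives $v <_{\lpo} t$ for every basic subterm $v$. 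Your case analysis on the position of $v$ relative to the substituted occurrence, followed by inspection of the last rule in the derivation of $w <_{\lpo} t$, is in effect an inlined proof of precisely the instance of closure-under-contexts that is needed here, simplified by the fact that $t$ is basic (so in the rule~\ref{d:lpo:3} case the leading arguments lie in $\TC{}$, are untouched by the substitution, and Corollary~\ref{c:lpo} forces the strict component into $\TC{}$). What your version buys is self-containedness: every step is an application of one of the three LPO rules or of Corollaries~\ref{c:lpo} and~\ref{c:lex}, which makes the formalizability in $\mU^1_2$ via Lemma~\ref{l:lpoU} transparent. What it costs is length, and note that your ``auxiliary structural fact'' (every closed constructor term is $<_{\lpo}$-below every basic term) is exactly the fact the paper invokes in its first line, justified the same way by $<_{\FS}$-minimality; so no new ingredient is actually avoided. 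There is no gap.
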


\begin{proof}
By $<_{\FS}$-minimality of constructors,
$s ' <_{\lpo} t'$ holds.
Hence $s[s'/t'] <_{\lpo} t$ from the assumption
$s <_{\lpo} t$.
This yields $v <_{\lpo} t$ by the definition of LPOs.
Write $v = f(v_1, \dots, v_k)$ for some $f \in \DS$ and
$v_1, \dots, v_k \in \mathbf{T(C)}$.
Let $i \in \{ 1, \dots, k \}$.
Then
$\size{v_i} \leq \qint{v_i} \leq \qint{v} \leq
 \qint{s[s'/t']} \leq \qint{t}$.
The last inequality follows from the monotonicity (ii) of the PQI
$\qint{\cdot}$.
This yields $\size{v_i} \leq \ell$ and hence $v <_{\ell} t$.
\end{proof}

\begin{theorem}
\label{t:MFG}
{\normalfont (in $\mU^1_2$)}
Suppose that $\RS$ is a quasi-reducible
$\LPOPoly$-program.
Then, for any basic term $t$, there exists a
 minimal function graph $G$
 (in the sense of Section \ref{s:mfg}) such that that
 $\seq{t, s} \in G$ holds for an $\RS$-normal form
 $s$ of $t$.
\end{theorem}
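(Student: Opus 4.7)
The plan is to apply the transfinite induction scheme \eqref{e:TI:B} of Lemma \ref{l:TI:B} to the $\BSig{1}$-formula $\varphi(t)$ expressing: there exist a set $G \subseteq \BF{\ell} \times \TC{\ell}$ and a term $s \in \TC{\ell}$ such that $\seq{t,s} \in G$ and every entry $\seq{u,v} \in G$ is justified by some rule $l \to r \in \RS$, some constructor substitution $\theta$ with $u = l\theta$, and further entries $\seq{t_0,s_0}, \ldots, \seq{t_{\size{r}-1}, s_{\size{r}-1}} \in G$ with $v = ((r\theta)[s_0/t_0]\cdots)[s_{\size{r}-1}/t_{\size{r}-1}]$. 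Since $\RS$ is finite, this justification clause is a finite disjunction of bounded-quantifier conditions, so the matrix of $\varphi$ is $\BSig{0}$ and hence $\varphi \in \BSig{1}$.

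Given the input basic term $t_0$, I fix $\ell := p(|\code{t_0}|)$ for a polynomial $p$ large enough that, by Proposition \ref{p:PQI}.\ref{p:PQI:2}, every basic (resp.\ constructor) term occurring in any reduct of $t_0$ lies in $\BF{\ell}$ (resp.\ $\TC{\ell}$). Because $\ell$ is a smash-free polynomial in $|\code{t_0}|$, the domain $\BF{\ell} \times \TC{\ell}$ is bounded by an $\LBA$-term, so subsets of it are available in $\mU^1_2$ via $(\BSig{0} \comp)$. It therefore suffices to derive $(\forall t \in \BF{\ell}) \, \varphi(t)$ from \eqref{e:TI:B}.

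For the induction step, assume $t \in \BF{\ell}$ and $\varphi(s)$ for all $s \in \BF{\ell}$ with $s <_\ell t$. By quasi-reducibility, pick $l \to r \in \RS$ and a constructor substitution $\theta$ with $t = l\theta$. I then perform at most $\size{r}$ many substitution steps: at step $i$, pick an innermost basic subterm $t_i$ of the current intermediate term (starting from $r\theta$); Lemma \ref{l:sub} applied to $r\theta <_{\lpo} t$ together with the replacements carried out so far gives $t_i <_\ell t$, so the induction hypothesis supplies a set $G_i$ and a term $s_i \in \TC{\ell}$ with $\seq{t_i, s_i} \in G_i$; substitute $s_i$ for $t_i$ and continue. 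Since constructor substitutions introduce no defined symbols, the loop ends after at most $\size{r}$ iterations (a standard bound depending only on the finite program $\RS$), producing $s \in \TC{\ell}$. Setting $G := \{\seq{t,s}\} \cup \bigcup_i G_i$, which is definable by $(\BSig{0} \comp)$ since its domain is bounded, the pair $(G, s)$ witnesses $\varphi(t)$. Instantiating $(\forall t \in \BF{\ell}) \varphi(t)$ at $t = t_0$ yields the theorem, as $s \in \TC{\ell}$ is automatically an $\RS$-normal form and the closure clause on $G$ unfolds (by finite descent through the justifications) to $t_0 \rewnf s$.

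The principal obstacle is the bookkeeping inside the induction step: one must verify that innermost-subterm selection, iterated substitution, and the Lemma \ref{l:sub} invocations all produce $\BSig{0}$-definable witnesses with polynomially bounded codes, and that the PQI-based size bounds propagate through the successive replacements so that Lemma \ref{l:sub} remains applicable at every stage. Once those ingredients are assembled, Lemma \ref{l:TI:B} does the real work of converting the $<_\ell$-recursion into a proof inside $\mU^1_2$.
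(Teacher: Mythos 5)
Your proposal is correct and follows essentially the same route as the paper: define the graph-with-justifications formula, fix $\ell = p(|\code{t_0}|)$ via Proposition \ref{p:PQI}, apply Lemma \ref{l:TI:B}, and in the induction step iterate at most $\size{r}$ replacements of basic subterms using quasi-reducibility, Lemma \ref{l:sub}, and the $<_{\ell}$-induction hypothesis, finally taking $G = \{\seq{t,s}\} \cup \bigcup_j G_j$. The only detail you defer that the paper handles explicitly is building the invariant $\qint{s} \leq \qint{t} \leq \ell$ into the definition of the graph formula itself (so that Lemma \ref{l:sub} stays applicable across successive replacements), together with the padding of the witness sequence to length exactly $\size{r}$.
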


\begin{proof}
Suppose that $\RS$ is a quasi-reducible
$\LPOPoly$-program witnessed by an LPO
$<_{\lpo}$ and a kind $0$ PQI $\qint{\cdot}$ and that $<_{\ell}$ is a
finite restriction of $<_{\lpo}$.
Let $\psi_{\ell} (x, y, X)$ denote a $\BSig{0}$-formula with no free variables
 other than $x$, $y$ and $X$ expressing that
$X \subseteq \BF{\ell} \times \TC{\ell}$
is a set of pairs of terms such that $\seq{x, y} \in X$,
and, for any $\seq{t, s} \in X$,
$\qint{s} \leq \qint{t} \leq \ell$ and
$\exists l \rightarrow r \in \RS$,
$\exists \theta: V_{\RS} \rightarrow \TC{\ell}$ s.t.
$t = l \theta$ and one of the following cases holds.
  \begin{enumerate}
   \item $s = r \theta \in \TC{\ell}$.
	 \label{t:mfg:1}
   \item $\exists \left\langle \seq{t_j, s_j} \in X \mid j < \size{r} \right\rangle$ s.t.
         $s = 
         \big( (r \theta) [s_0/ t_0] \cdots \big)
	 [s_{\size{r} -1}/t_{\size{r} -1}]$,
	 where $s'[u/v]$ is identical if no $v$ occurs in $s'$.
	 \label{t:mfg:2}
  \end{enumerate}
Note that, since $V_{\RS}$ is a finite set of variables,
$\exists \theta: V_{\RS} \rightarrow \TC{\ell}$ 
can be regarded as a (first order) bounded quantifier.
By Proposition \ref{p:PQI}.\ref{p:PQI:1},
we can find a polynomial term
$p(x)$ such that $\qint{t} \leq p(|\code{t}|)$ holds for any
 $t \in \BF{}$.
The rest of the proof is devoted to deduce
$(\forall t \in \BF{}) (\exists s \in \TC{\ell})
 \exists G \ \psi_{p(|\code{t}|)} (t, s, G)$
 for such a bounding polynomial $p$.
 Fix an input basic term $t_0 \in \BF{}$ and let
 $\varphi_{\ell} (t)$ denote the $\BSig{1}$-formula
$(\exists s \in \TC{\ell}) \exists G \ \psi_{\ell} (t, s, G)$,
where $\ell = p(|\code{t_0}|)$.
Since $t_0 \in \BF{\ell}$, it suffices to deduce
$(\forall t \in \BF{\ell}) \varphi_{\ell} (t)$.
 By Lemma \ref{l:TI:B}, this follows from
$(\forall t \in \BF{\ell})
\big( (\forall s \in \BF{\ell}) (s <_{\ell} t \rightarrow \varphi_{\ell} (s))
       \rightarrow \varphi_{\ell}(t)
\big)
$,
which is the premise of an instance of
$\eqref{e:TI:B}$.
Thus let $t \in \BF{\ell}$ and assume the condition
\begin{equation} 
\label{e:t:prog}
(\forall s \in \BF{\ell}) (s <_{\ell} t \rightarrow \varphi_{\ell} (s)).
\end{equation}
Since $\RS$ is quasi-reducible,
there exist a rule $l \rightarrow r \in \RS$ and a substitution
$\theta: V_{\RS} \rightarrow \TC{\ell}$ such that
 $t = l \theta$.
The remaining argument splits into two cases depending on the shape of
$r \theta$.

{\sc Case \ref{t:mfg:1}.} 
 $r \theta \in \TC{\ell}$:
In this case
$\psi_{\ell} (t, r \theta, G)$ holds for the singleton
$G := \{ \seq{t, r \theta}\}$.

{\sc Case \ref{t:mfg:2}.} 
 $r \theta \not\in \TC{\ell}$:
 In this case there exists a basic subterm $v_0$ of $r \theta$.
 Fix a term $u_0 \in \TC{\ell}$ such that
 $\qint{u_0} \leq \qint{v_0}$.
 We show the following claim by finitary induction on
 $m < \size{r}$.
  \begin{claim}
  There exists a sequence
  $\left\langle \seq{t_j, s_j, G_j}
               \mid j \leq m
 \right\rangle$
 of triplets such that, for each $j \leq m$,
  {\normalfont (i)} $t_j <_{\ell} t$,
  {\normalfont (ii)} $\psi_{\ell} (t_j, s_j, G_j)$ holds, and
  {\normalfont (iii)}
  $\big( (r \theta) [s_0 / t_0] \cdots
  \big) [s_j / t_j]$
  is not identical to
  $\big( (r \theta) [s_0 / t_0] \cdots
  \big) [s_{j-1} / t_{j-1}]$
  as long as 
  $\big( (r \theta) [s_0 / t_0] \cdots
  \big) [s_{j-1} / t_{j-1}]$
  has a basic subterm.
  \end{claim}
 In the base case $m=0$, let $t_0$ be an arbitrary basic subterm of $r \theta$.
 Then, since $\qint{r \theta} \leq \qint{l \theta}$, $t_0 <_{\ell} t$
 follows from the definition of LPOs. 
 Hence, by the assumption \eqref{e:t:prog}, there exist a term
 $s_0 \in \TC{\ell}$ and a set $G_0$ such that
 $\psi_{\ell} (t_0, s_0, G_0)$ holds.
 Clearly, $(r \theta)[s_0/t_0]$ is not identical to $r \theta$.
 For induction step, suppose that there exists a sequence
  $\left\langle \seq{t_j, s_j, G_j}
               \mid j \leq m
 \right\rangle$
 fulfilling the conditions (i)--(iii) in the claim.
 In case that
  $\big( (r \theta) [s_0 / t_0] \cdots
  \big) [s_m / t_m]$
 has no basic subterm, let
 $(t_{m+1}, s_{m+1}) = (v_0, u_0)$.
 Otherwise, let $t_{m+1}$ be an arbitrary basic subterm.
 Then $t_{m+1} <_{\ell} t$ holds by Lemma
 \ref{l:sub}.
 Hence, as in the base case, the assumption \eqref{e:t:prog} yields a term
 $s_{m+1} \in \TC{\ell}$ and a set $G_{m+1}$ such that
 $\psi_{\ell} (t_{m+1}, s_{m+1}, G_{m+1})$ holds.
 By the choice of $t_{m+1}$,
  $\big( (r \theta) [s_0 / t_0] \cdots
 \big) [s_{m+1} / t_{m+1}]$
 is not identical to  
  $\big( (r \theta) [s_0 / t_0] \cdots
 \big) [s_m / t_m]$.

Now let
 $s:=
  \big( (r \theta) [s_0 / t_0] \cdots
 \big) [s_{\size{r} -1} / t_{\size{r} -1}]$
for a sequence
  $\left\langle \seq{t_j, s_j, G_j}
               \mid j < \size{r}
 \right\rangle$
witnessing the claim in case $m= \size{r} -1$. 
Then $s \in \TC{\ell}$
since
 $|\{ f \in \DS \mid f \text{ appears in }
      \big( (r \theta) [s_0 / t_0] \cdots
      \big) [s_j / t_j]
 \}| \leq \size{r} - (j+1)$
 holds for each $j < \size{r}$
  by the condition (iii) in the claim.
Defining a set $G$ by
$G = \{ \seq{t, s} \} \cup 
 \left( \bigcup_{j < \size{r}} G_j \right)$
now allows us to conclude 
$\psi_{\ell} (t, s, G)$.
\end{proof}

\section{Application}
\label{s:application}

In the last section, to convince readers that the formalization of
termination proofs described in Theorem
\ref{t:MFG} for
$\LPOPoly$-programs is optimal, we show that the formalization
yields an alternative proof of Theorem \ref{t:pqi}, i.e., that
$\LPOPoly$-programs can only compute polynomial-space computable functions.

The next lemma ensures that the set $G$ constructed in Theorem
\ref{t:MFG} is indeed a minimal function graph.
 \begin{lemma}
  \label{l:MFG}
  Suppose that $\RS$ is a quasi-reducible $\LPOPoly$-program.
  Let $\psi_{\ell} (x, y, X)$ denote the $\BSig{0}$-formula defined in
  the proof of Theorem \ref{t:MFG}.
  Then, for any $t \in \BF{}$ and for any $t \in \TC{}$,
  $t \rewnf s$ if and only if
  $\exists G \ \psi_{p(|\code{t}|)} (t, s, G)$
  holds under the standard semantics.
 \end{lemma}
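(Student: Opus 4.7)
The plan is to establish both directions of the equivalence separately, paralleling the construction in Theorem \ref{t:MFG}. Write $\ell := p(|\code{t}|)$ throughout; Proposition \ref{p:PQI} then guarantees that any reduct of $t$ stays within $\BF{\ell} \cup \TC{\ell}$ and that quantifiers over these sets behave as bounded ones.

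For the ``only if'' direction, I would start from a reduction witness $t \rewnf s$ and construct $G$ by induction on the length of the innermost normalizing reduction. Since $\RS$ is quasi-reducible, there exist a rule $l \rightarrow r \in \RS$ and constructor substitution $\theta : V_{\RS} \to \TC{\ell}$ with $t = l\theta$, so the first step is $t \rew r\theta$. If $r\theta \in \TC{\ell}$, then $s = r\theta$ and $G := \{\seq{t, s}\}$ realizes Case \ref{t:mfg:1}. Otherwise, mirroring the construction in the proof of Theorem \ref{t:MFG}, I would iterate through the basic subterms obtained after successive innermost substitutions: each basic subterm $t_j$ encountered admits a strictly shorter innermost-normalizing sub-derivation $t_j \rewnf s_j \in \TC{\ell}$ sitting inside $t \rewnf s$, with $t_j <_{\ell} t$ by Lemma \ref{l:sub}. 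The induction hypothesis supplies caches $G_j$ with $\psi_{\ell}(t_j, s_j, G_j)$, and $G := \{\seq{t, s}\} \cup \bigcup_j G_j$ realizes Case \ref{t:mfg:2}.

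For the ``if'' direction, I would assume $\psi_{\ell}(t, s, G)$ and prove by $<_{\ell}$-transfinite induction on $t' \in \BF{\ell}$ (Lemma \ref{l:TI:B}) that every pair $\seq{t', s'} \in G$ satisfies $t' \rewnf s'$. Given such a pair, $\psi_{\ell}$ supplies a rule $l \rightarrow r$ and substitution $\theta$ with $t' = l\theta$, so $t' \rew r\theta$. Case \ref{t:mfg:1} is immediate since then $s' = r\theta \in \TC{\ell}$. In Case \ref{t:mfg:2}, $s'$ is obtained from $r\theta$ by iteratively replacing each $t_j$ with $s_j$; the induction hypothesis yields $t_j \rewnf s_j$ as soon as one verifies $t_j <_{\ell} t'$, which is exactly what Lemma \ref{l:sub} provides stage-by-stage, since $r\theta <_{\lpo} l\theta = t'$ and each intermediate substitution preserves the required bound. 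Assembling these innermost reductions in order produces $t' \rewnf s'$ with $s' \in \TC{\ell}$.

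The main obstacle will be the bookkeeping in Case \ref{t:mfg:2} of the ``if'' direction: one must verify that the iterated substitution structure is faithfully realized by a legal innermost reduction sequence, and that each $t_j$ really falls strictly below $t'$ under $<_{\ell}$ so that the transfinite induction is applicable. Both aspects are handled by iterated applications of Lemma \ref{l:sub}, exactly matching the iterative construction used in the proof of Theorem \ref{t:MFG}.
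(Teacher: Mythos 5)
Your proposal matches the paper's proof, which is only a two-sentence sketch: the ``only if'' direction by induction on the length $m$ of the reduction $t \rewm{m} s$, and the ``if'' direction by transfinite induction along the LPO, with Lemma \ref{l:sub} supplying the needed $t_j <_{\ell} t'$ bounds exactly as you describe. The only nuance is that, since the lemma is a statement about the standard semantics rather than provability in $\mU^1_2$, the induction in the ``if'' direction is \emph{external} transfinite induction along the well-founded order $<_{\lpo}$ rather than an application of the formalized schema of Lemma \ref{l:TI:B}.
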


 \begin{proof}
  Let $\RS$ reduce under an LPO $<_{\lpo}$.
  For the  ``if'' direction, it can be shown that
  $(\forall t \in \BF{}) (\forall s \in \TC{})
  \big( \exists G \ \psi_{p(|\code{t}|)} (t, s, G) \Rightarrow  t \rewnf s
  \big)$
  holds by (external) transfinite induction along $<_{\lpo}$.
  For the ``only if'' direction, it can be shown that
  $(\forall t \in \BF{}) (\forall s \in \TC{})
  \big( t \rewm{m} s \Rightarrow \exists G \ \psi_{p(|\code{t}|)} (t, s, G) 
  \big)$
  holds by induction on $m$, where $\rewm{m}$ denotes the $m$-fold
  iteration of $\rew$.
 \end{proof}

Now Theorem \ref{t:MFG} and Lemma \ref{l:MFG} yield an alternative proof of
 (a variant of) Theorem \ref{t:pqi}.
 \begin{corollary}
   \label{c:pqi}
Every function computed by a quasi-reducible
$\LPOPoly$-program is computable in  polynomial space.
\end{corollary}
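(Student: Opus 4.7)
The plan is to package Theorem \ref{t:MFG} as a $\BSig{1}$-definability statement in $\mU^1_2$ and then invoke Lemma \ref{l:Buss} (the extraction side of Theorem \ref{t:Buss86}.\ref{t:Buss86:U}) to pull out a polynomial-space algorithm. Since the work of verifying provability in $\mU^1_2$ has already been done in Theorem \ref{t:MFG}, the remaining task is mostly bookkeeping: making sure the extracted witness really is the normal form, and that everything has the right quantifier complexity.

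First, I would fix a quasi-reducible $\LPOPoly$-program $\RS$ witnessed by an LPO $<_{\lpo}$ and a kind $0$ PQI $\qint{\cdot}$, and let $p(x)$ be the polynomial from Proposition \ref{p:PQI}.\ref{p:PQI:1} such that $\qint{t}\leq p(|\code{t}|)$ for all $t\in\BF{}$. For a fixed defined symbol $f\in\DS$ of arity $k$, I would consider the $\BSig{1}$-formula
\[
  \Phi_f(x_1,\dots,x_k,y) \;\equiv\; \exists G\,\psi_{p(|\code{f(x_1,\dots,x_k)}|)}\bigl(f(x_1,\dots,x_k),y,G\bigr),
\]
where $\psi_\ell$ is the $\BSig{0}$-formula from the proof of Theorem \ref{t:MFG}. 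The key point is that $\psi_\ell$ is $\BSig{0}$: all first-order quantifiers over $\TC{\ell}$, $\BF{\ell}$ and substitutions $\theta:V_\RS\to\TC{\ell}$ are (sharply) bounded by polynomial terms in $|\code{f(x_1,\dots,x_k)}|$, so they count as bounded in $\mU^1_2$. Theorem \ref{t:MFG} states exactly that $\mU^1_2\vdash\forall\vec x\,\exists y\,\Phi_f(\vec x,y)$.

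Next, Lemma \ref{l:Buss} yields a function $F_f:\mathbb N^k\to\mathbb N$, computable in space polynomial in $|\underline{s_1}|,\dots,|\underline{s_k}|$ (hence in $\size{f(s_1,\dots,s_k)}$), such that $\Phi_f(\underline{\vec s},\underline{F_f(\vec s)})$ holds under the standard semantics for every tuple $\vec s$ of constructor terms. By Lemma \ref{l:MFG}, the truth of $\exists G\,\psi_{p(|\code{f(\vec s)}|)}(f(\vec s),F_f(\vec s),G)$ is equivalent to $f(\vec s)\rewnf F_f(\vec s)$. Since $\RS$ is assumed to compute a function, the innermost normal form of $f(\vec s)$ is unique, so $F_f(\vec s)=\fint{f}(\vec s)$. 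Thus $\fint{f}$ is polynomial-space computable, which is the claim.

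The main technical obstacle is not the overall architecture but justifying that $\Phi_f$ really is $\BSig{1}$ and that the polynomial bounds line up. Specifically, one must check that the second-order quantifier $\exists G$ ranges over sets encoded by polynomially many bits in $|\code{f(\vec s)}|$ — this follows because every pair $\langle t',s'\rangle\in G$ satisfies $|\code{t'}|,|\code{s'}|\leq q(|\code{f(\vec s)}|)$ for some polynomial $q$ (by Proposition \ref{p:PQI}.\ref{p:PQI:1} together with the encoding properties \ref{en:code:2}, \ref{en:code:3}), and that the number of such pairs is likewise polynomial. Once these bounds are in place, the comprehension and induction used in Theorem \ref{t:MFG} stay within the $\BSig{1}$-fragment and Lemma \ref{l:Buss} applies verbatim.
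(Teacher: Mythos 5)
Your overall architecture coincides with the paper's: extract a polynomial-space function from the $\mU^1_2$-provability established in Theorem \ref{t:MFG} via Lemma \ref{l:Buss}, then use Lemma \ref{l:MFG} and uniqueness of innermost normal forms to identify the extracted function with $\fint{f}$. There is, however, one genuine gap, and it is precisely the point on which the paper's proof spends most of its effort. You assert that Theorem \ref{t:MFG} ``states exactly that $\mU^1_2\vdash\forall\vec x\,\exists y\,\Phi_f(\vec x,y)$'' with $\Phi_f\equiv\exists G\,\psi_{p(|\code{f(\vec{x})}|)}(f(\vec{x}),y,G)$. It does not: Theorem \ref{t:MFG} is proved \emph{under the hypotheses} that $\RS$ is quasi-reducible, reduces under $<_{\lpo}$, and admits the kind $0$ PQI, and its proof uses these hypotheses internally (quasi-reducibility to produce a matching rule with $t=l\theta$, the inequality $\qint{r\theta}\leq\qint{l\theta}$ to keep all terms inside $\BF{\ell}\times\TC{\ell}$). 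To invoke Lemma \ref{l:Buss} you need an honest $\BSig{1}$-formula whose existential closure is provable in $\mU^1_2$ outright, so these hypotheses must be internalized into the formula. As literally stated, $\QR{}$ quantifies over all closed basic terms and $\PQI{}$ over all constructor substitutions; folding them in as an antecedent would introduce unbounded first-order quantifiers and the implication would no longer be a bounded formula, let alone $\BSig{1}$.

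The paper's fix is to observe that only instances bounded by $\ell=p(|\code{t}|)$ are actually used, to replace the hypotheses by the $\BSig{0}$-formulas $\QR{\ell}$ and $\PQI{\ell}$, to take $\varphi_\ell(t,s):\equiv\QR{\ell}\wedge\LPO\wedge\PQI{\ell}\rightarrow\exists G\,\psi_\ell(t,s,G)$ as the $\BSig{1}$-formula fed to Lemma \ref{l:Buss}, and only at the very end to discharge the antecedent in the standard model, where it is true of $\RS$. Your remarks about bounding the second-order quantifier $\exists G$ and the quantifiers inside $\psi_\ell$ address the complexity of the conclusion but not of these hypotheses, so this step needs to be added. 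The remainder of your argument (the appeal to Lemma \ref{l:MFG}, uniqueness of the normal form, and the passage from $|\code{t_i}|$ to $\size{t_i}$) matches the paper and is fine.
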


 \begin{proof}
  By Theorem \ref{t:MFG}, $\mU^1_2$ proves the formula
  \begin{equation*}
  \QR{} \wedge
  \LPO \wedge
  \PQI{} 
  \rightarrow 
  (\forall t \in \BF{}) (\exists s \in \TC{p(|\code{t}|)})
  \exists G \ \psi_{p(|\code{t}|)} (t, s, G),
  \end{equation*}
  where $\QR{}$, $\LPO{}$ and $\PQI{}$ respectively express that
  any $\BF{}$-term is reducible,
  $\RS$ reduces under $<_{\lpo}$, and
  $(\forall (l \rightarrow r) \in \RS)
  (\forall \theta: V_{\RS} \rightarrow \TC{})
  \qint{r \theta} \leq \qint{l \theta}$.
By Lemma \ref{l:lpoU}, $\LPO$ can be expressed with a
  $\BSig{0}$-formula, but neither $\QR{}$ nor $\PQI{}$ is literally
  expressible with a bounded formula.
Nonetheless, the proof can be easily modified to a proof of the statement 
\begin{equation*}
 (\forall t \in \BF{}) (\exists s \in \TC{\ell})
  \big(
  \QR{\ell} \wedge
  \LPO \wedge
  \PQI{\ell} 
  \rightarrow 
  \exists G \ \psi_{\ell} (t, s, G)
  \big),
\end{equation*}
  where $\ell = p(|\code{t}|)$, and $\QR{\ell}$ and $\PQI{\ell}$
  respectively express that
  any $\BF{\ell}$-term is reducible, and
  $(\forall (l \rightarrow r) \in \RS)
  (\forall \theta: V_{\RS} \rightarrow \TC{\ell})
  \qint{r \theta} \leq \qint{l \theta}$.
  Both $\QR{\ell}$ and $\PQI{\ell}$ can be regarded as
  $\BSig{0}$-formulas, and hence the formula
  $\varphi_{\ell} (t, s) :\equiv
  \QR{\ell} \wedge
  \LPO \wedge
  \PQI{\ell} 
  \rightarrow 
  \exists G \ \psi_{\ell} (t, s, G)
  $
  lies in $\BSig{1}$.

  Now suppose that a function
  $\fint{\mf}: \TC{}^k \rightarrow \TC{}$
  is computed by a quasi-reducible $\LPOPoly$-program $\RS$
  for some $k$-ary function symbol $\mf \in \DS$.
  Then Lemma \ref{l:Buss} yields a polynomial-space computable function
  $f: \mathbb N^k \rightarrow \mathbb N$ such that
  $\varphi_{p(|\code{\mf (t_1, \dots, t_k}|)}
   \big( \mf (t_1, \dots, t_k),
         \underline{f (\code{t_1}, \dots, \code{t_k})}
   \big)$
  holds for any
  $t_1, \dots, t_k \in \TC{}$ under the standard semantics.
  Hence, by assumption,
  $\psi_{p(|\code{\mf (t_1, \dots, t_k}|)}
   \big( \mf (t_1, \dots, t_k),
         \underline{f (\code{t_1}, \dots, \code{t_k})}, G
   \big)$
  holds for some set $G \subseteq \BF{} \times \TC{}$.
  By Lemma \ref{l:MFG}, this means the correspondence
  $\fint{\mf} (t_1, \dots, t_k) = s \Leftrightarrow
  f(\code{t_1}, \dots, \code{t_k}) = \code{s}$.
  Therefore,
$\code{\fint{\mf} (t_1, \dots, t_k)}$ can be computed with space bounded by a
 polynomial in $|\code{t_1}|, \dots, |\code{t_k}|$
and thus bounded by a polynomial in $\size{t_1}, \dots, \size{t_k}$.
 \end{proof}

\section{Conclusion}

This work is concerned with optimal termination proofs for functional programs
in the hope of establishing logical foundations of
computational resource analysis.
Optimal termination proofs were limited for programs that compute
functions lying in complexity classes closed under exponentiation.
In this paper, employing the notion of minimal function graph, we showed
that termination proofs under $\LPOPoly$-programs can be optimally formalized in
the second order system $\mU^1_2$ of bounded arithmetic
that is complete for
polynomial-space computable functions, lifting the limitation.
The crucial idea is that inductive definitions of minimal function
graphs under $\LPOPoly$-programs can be approximated with transfinite
induction along LPOs.
As a small consequence, compared to the original result, Theorem
\ref{t:pqi}, when we say ``a program $\RS$ computes a function'',
the quasi-reducibility of $\RS$ is explicitly needed to enable the formalization.

Finally, let us call a program $\RS$ an
{\em $\MPOPoly$} one if $\RS$ reduces under an MPO (with product
status only) and $\RS$ admits a kind $0$ PQI.
In \cite[Theorem 42]{BMM11}, Theorem \ref{t:pqi} is refined so that a
function can be computed by an
$\MPOPoly$-program if and only if it is computable in polynomial time.
The program $\RS_{\m{lcs}}$ described in Example \ref{ex:LCS} is an example of $\MPOPoly$-programs, and
hence the length of the longest common subsequences
is computable even in polynomial time.
By Theorem \ref{t:Buss86}.\ref{t:Buss86:S}, it is quite natural to expect that minimal
function graphs under $\MPOPoly$-programs can be constructed in the
first order system $\mS^1_2$.
However, we then somehow have to adopt the formula 
$\varphi_{\ell} (t, s) \equiv
  \QR{\ell} \wedge
  \LPO \wedge
  \PQI{\ell} 
  \rightarrow 
  \exists G \ \psi_{\ell} (t, s, G)
$
(in the proof of Corollary \ref{c:pqi}) to a $\bSig{1}$-formula, which
is clearly more involved than the present case.  

\nocite{*}
\bibliographystyle{eptcs}
\bibliography{fics2015}

\begin{thebibliography}{10}
\providecommand{\bibitemdeclare}[2]{}
\providecommand{\surnamestart}{}
\providecommand{\surnameend}{}
\providecommand{\urlprefix}{Available at }
\providecommand{\url}[1]{\texttt{#1}}
\providecommand{\href}[2]{\texttt{#2}}
\providecommand{\urlalt}[2]{\href{#1}{#2}}
\providecommand{\doi}[1]{doi:\urlalt{http://dx.doi.org/#1}{#1}}
\providecommand{\bibinfo}[2]{#2}

\bibitemdeclare{article}{BeckB14}
\bibitem{BeckB14}
\bibinfo{author}{A.~\surnamestart Beckmann\surnameend} \& \bibinfo{author}{S.R.
  \surnamestart Buss\surnameend} (\bibinfo{year}{2014}):
  \emph{\bibinfo{title}{Improved Witnessing and Local Improvement Principles
  for Second-order Bounded Arithmetic}}.
\newblock {\sl \bibinfo{journal}{ACM Transactions on Computational Logic}}
  \bibinfo{volume}{15}(\bibinfo{number}{1}), p.~\bibinfo{pages}{2},
  \doi{10.1145/2559950}.

\bibitemdeclare{article}{BCMT01}
\bibitem{BCMT01}
\bibinfo{author}{G.~\surnamestart Bonfante\surnameend},
  \bibinfo{author}{A.~\surnamestart Cichon\surnameend}, \bibinfo{author}{J.-Y.
  \surnamestart Marion\surnameend} \& \bibinfo{author}{H.~\surnamestart
  Touzet\surnameend} (\bibinfo{year}{2001}): \emph{\bibinfo{title}{Algorithms
  with Polynomial Interpretation Termination Proof}}.
\newblock {\sl \bibinfo{journal}{Journal of Functional Programming}}
  \bibinfo{volume}{11}(\bibinfo{number}{1}), pp. \bibinfo{pages}{33--53},
  \doi{10.1017/S0956796800003877}.

\bibitemdeclare{inproceedings}{BMM01}
\bibitem{BMM01}
\bibinfo{author}{G.~\surnamestart Bonfante\surnameend}, \bibinfo{author}{J.-Y.
  \surnamestart Marion\surnameend} \& \bibinfo{author}{J.-Y. \surnamestart
  Moyen\surnameend} (\bibinfo{year}{2001}): \emph{\bibinfo{title}{{On
  Lexicographic Termination Ordering with Space Bound Certifications}}}.
\newblock In: {\sl \bibinfo{booktitle}{Perspectives of System Informatics}},
  {\sl \bibinfo{series}{Lecture Notes in Computer Science}}
  \bibinfo{volume}{2244}, pp. \bibinfo{pages}{482--493},
  \doi{10.1007/3-540-45575-2\_46}.

\bibitemdeclare{article}{BMM11}
\bibitem{BMM11}
\bibinfo{author}{G.~\surnamestart Bonfante\surnameend}, \bibinfo{author}{J.-Y.
  \surnamestart Marion\surnameend} \& \bibinfo{author}{J.-Y. \surnamestart
  Moyen\surnameend} (\bibinfo{year}{2011}):
  \emph{\bibinfo{title}{Quasi-interpretations A Way to Control Resources}}.
\newblock {\sl \bibinfo{journal}{Theoretical Computer Science}}
  \bibinfo{volume}{412}(\bibinfo{number}{25}), pp. \bibinfo{pages}{2776--2796},
  \doi{10.1016/j.tcs.2011.02.007}.

\bibitemdeclare{article}{buch95}
\bibitem{buch95}
\bibinfo{author}{W.~\surnamestart Buchholz\surnameend} (\bibinfo{year}{1995}):
  \emph{\bibinfo{title}{Proof-theoretic Analysis of Termination Proofs}}.
\newblock {\sl \bibinfo{journal}{Annals of Pure and Applied Logic}}
  \bibinfo{volume}{75}(\bibinfo{number}{1--2}), pp. \bibinfo{pages}{57--65},
  \doi{10.1016/0168-0072(94)00056-9}.

\bibitemdeclare{book}{buss86}
\bibitem{buss86}
\bibinfo{author}{S.R. \surnamestart Buss\surnameend} (\bibinfo{year}{1986}):
  \emph{\bibinfo{title}{Bounded Arithmetic}}.
\newblock \bibinfo{publisher}{Bibliopolis, Napoli}.

\bibitemdeclare{incollection}{buss98}
\bibitem{buss98}
\bibinfo{author}{S.R. \surnamestart Buss\surnameend} (\bibinfo{year}{1998}):
  \emph{\bibinfo{title}{{First-Order Proof Theory of Arithmetic}}}.
\newblock In \bibinfo{editor}{S.R. \surnamestart Buss\surnameend}, editor: {\sl
  \bibinfo{booktitle}{Handbook of Proof Theory}}, \bibinfo{publisher}{North
  Holland, Amsterdam}, pp. \bibinfo{pages}{79--147},
  \doi{10.1016/S0049-237X(98)80017-7}.

\bibitemdeclare{article}{Dershowitz82}
\bibitem{Dershowitz82}
\bibinfo{author}{N.~\surnamestart Dershowitz\surnameend}
  (\bibinfo{year}{1982}): \emph{\bibinfo{title}{{Orderings for Term-Rewriting
  Systems}}}.
\newblock {\sl \bibinfo{journal}{Theoretical Computer Science}}
  \bibinfo{volume}{17}, pp. \bibinfo{pages}{279--301},
  \doi{10.1016/0304-3975(82)90026-3}.

\bibitemdeclare{incollection}{Eguchi10}
\bibitem{Eguchi10}
\bibinfo{author}{N.~\surnamestart Eguchi\surnameend} (\bibinfo{year}{2010}):
  \emph{\bibinfo{title}{A Term-rewriting Characterization of PSPACE}}.
\newblock In \bibinfo{editor}{T.~\surnamestart Arai\surnameend},
  \bibinfo{editor}{C.T. \surnamestart Chong\surnameend},
  \bibinfo{editor}{R.~\surnamestart Downey\surnameend},
  \bibinfo{editor}{J.~\surnamestart Brendle\surnameend},
  \bibinfo{editor}{Q.~\surnamestart Feng\surnameend},
  \bibinfo{editor}{H.~\surnamestart Kikyo\surnameend} \&
  \bibinfo{editor}{H.~\surnamestart Ono\surnameend}, editors: {\sl
  \bibinfo{booktitle}{Proceedings of the 10th Asian Logic Conference 2008}},
  \bibinfo{publisher}{World Scientific}, pp. \bibinfo{pages}{93--112},
  \doi{10.1142/9789814293020\_0004}.

\bibitemdeclare{inproceedings}{Hof90}
\bibitem{Hof90}
\bibinfo{author}{D.~\surnamestart Hofbauer\surnameend} (\bibinfo{year}{1990}):
  \emph{\bibinfo{title}{Termination Proofs by Multiset Path Orderings Imply
  Primitive Recursive Derivation Lengths}}.
\newblock In: {\sl \bibinfo{booktitle}{Proceedings of the 2nd International
  Conference on Algebraic and Logic Programming}}, {\sl
  \bibinfo{series}{Lecture Notes in Computer Science}} \bibinfo{volume}{463},
  pp. \bibinfo{pages}{347--358}, \doi{10.1007/3-540-53162-9\_50}.

\bibitemdeclare{book}{Jones97}
\bibitem{Jones97}
\bibinfo{author}{N.D. \surnamestart Jones\surnameend} (\bibinfo{year}{1997}):
  \emph{\bibinfo{title}{Computability and Complexity - from a Programming
  Perspective}}.
\newblock \bibinfo{series}{Foundations of Computing Series},
  \bibinfo{publisher}{MIT Press}, \doi{10.1007/978-94-010-0413-8\_4}.

\bibitemdeclare{inproceedings}{JonesM86}
\bibitem{JonesM86}
\bibinfo{author}{N.D. \surnamestart Jones\surnameend} \&
  \bibinfo{author}{A.~\surnamestart Mycroft\surnameend} (\bibinfo{year}{1986}):
  \emph{\bibinfo{title}{Data Flow Analysis of Applicative Programs Using
  Minimal Function Graphs}}.
\newblock In: {\sl \bibinfo{booktitle}{Proceedings of the 13th ACM Symposium on
  Principles of Programming Languages}}, pp. \bibinfo{pages}{296--306},
  \doi{10.1145/512644.512672}.

\bibitemdeclare{unpublished}{Kamin_Levy}
\bibitem{Kamin_Levy}
\bibinfo{author}{S.~\surnamestart Kamin\surnameend} \& \bibinfo{author}{J.-J.
  \surnamestart L\'{e}vy\surnameend} (\bibinfo{year}{1980}):
  \emph{\bibinfo{title}{Two Generalizations of the Recursive Path Ordering}}.
\newblock \bibinfo{note}{Unpublished manuscript, University of Illinois}.

\bibitemdeclare{article}{LM95}
\bibitem{LM95}
\bibinfo{author}{D.~\surnamestart Leivant\surnameend} \& \bibinfo{author}{J.-Y.
  \surnamestart Marion\surnameend} (\bibinfo{year}{1995}):
  \emph{\bibinfo{title}{{Ramified Recurrence and Computational Complexity II:
  Substitution and Poly-space}}}.
\newblock {\sl \bibinfo{journal}{Lecture Notes in Computer Science}}
  \bibinfo{volume}{933}, pp. \bibinfo{pages}{486--500},
  \doi{10.1007/BFb0022277}.

\bibitemdeclare{article}{Marion03}
\bibitem{Marion03}
\bibinfo{author}{J.-Y. \surnamestart Marion\surnameend} (\bibinfo{year}{2003}):
  \emph{\bibinfo{title}{Analysing the Implicit Complexity of Programs}}.
\newblock {\sl \bibinfo{journal}{Information and Computation}}
  \bibinfo{volume}{183}(\bibinfo{number}{1}), pp. \bibinfo{pages}{2--18},
  \doi{10.1016/S0890-5401(03)00011-7}.

\bibitemdeclare{incollection}{Oitavem01}
\bibitem{Oitavem01}
\bibinfo{author}{I.~\surnamestart Oitavem\surnameend} (\bibinfo{year}{2001}):
  \emph{\bibinfo{title}{Implicit Characterizations of Pspace}}.
\newblock In: {\sl \bibinfo{booktitle}{Proof Theory in Computer Science}}, {\sl
  \bibinfo{series}{Lecture Notes in Computer Science}} \bibinfo{volume}{2183},
  \bibinfo{publisher}{Springer}, pp. \bibinfo{pages}{170--190},
  \doi{10.1007/3-540-45504-3\_11}.

\bibitemdeclare{article}{Oitavem02}
\bibitem{Oitavem02}
\bibinfo{author}{I.~\surnamestart Oitavem\surnameend} (\bibinfo{year}{2002}):
  \emph{\bibinfo{title}{A Term Rewriting Characterization of the Functions
  Computable in Polynomial Space}}.
\newblock {\sl \bibinfo{journal}{Archive for Mathematical Logic}}
  \bibinfo{volume}{41}(\bibinfo{number}{1}), pp. \bibinfo{pages}{35--47},
  \doi{10.1007/s001530200002}.

\bibitemdeclare{article}{Savitch70}
\bibitem{Savitch70}
\bibinfo{author}{W.J. \surnamestart Savitch\surnameend} (\bibinfo{year}{1970}):
  \emph{\bibinfo{title}{Relationships Between Nondeterministic and
  Deterministic Tape Complexities}}.
\newblock {\sl \bibinfo{journal}{Journal of Computer and System Sciences}}
  \bibinfo{volume}{4}(\bibinfo{number}{2}), pp. \bibinfo{pages}{177--192},
  \doi{10.1016/S0022-0000(70)80006-X}.

\bibitemdeclare{book}{SlonnegerK95}
\bibitem{SlonnegerK95}
\bibinfo{author}{K.~\surnamestart Slonneger\surnameend} \&
  \bibinfo{author}{B.L. \surnamestart Kurtz\surnameend} (\bibinfo{year}{1995}):
  \emph{\bibinfo{title}{Formal Syntax and Semantics of Programming Languages -
  A Laboratory Based Approach}}.
\newblock \bibinfo{publisher}{Addison-Wesley}.

\bibitemdeclare{book}{Terese}
\bibitem{Terese}
\bibinfo{author}{\surnamestart Terese\surnameend} (\bibinfo{year}{2003}):
  \emph{\bibinfo{title}{Term Rewriting Systems}}.
\newblock {\sl \bibinfo{series}{Cambridge Tracts in Theoretical Computer
  Science}}~\bibinfo{volume}{55}, \bibinfo{publisher}{Cambridge University
  Press}.

\bibitemdeclare{article}{Thompson72}
\bibitem{Thompson72}
\bibinfo{author}{D.B. \surnamestart Thompson\surnameend}
  (\bibinfo{year}{1972}): \emph{\bibinfo{title}{Subrecursiveness:
  Machine-Independent Notions of Computability in Restricted Time and
  Storage}}.
\newblock {\sl \bibinfo{journal}{Mathematical Systems Theory}}
  \bibinfo{volume}{6}(\bibinfo{number}{1}), pp. \bibinfo{pages}{3--15},
  \doi{10.1007/BF01706069}.

\bibitemdeclare{article}{Weier95}
\bibitem{Weier95}
\bibinfo{author}{A.~\surnamestart Weiermann\surnameend} (\bibinfo{year}{1995}):
  \emph{\bibinfo{title}{Termination Proofs for Term Rewriting Systems by
  Lexicographic Path Orderings Imply Multiply Recursive Derivation Lengths}}.
\newblock {\sl \bibinfo{journal}{Theoretical Computer Science}}
  \bibinfo{volume}{139}(\bibinfo{number}{1{\&}2}), pp.
  \bibinfo{pages}{355--362}, \doi{10.1016/0304-3975(94)00135-6}.

\bibitemdeclare{book}{Winskel93}
\bibitem{Winskel93}
\bibinfo{author}{G.~\surnamestart Winskel\surnameend} (\bibinfo{year}{1993}):
  \emph{\bibinfo{title}{The Formal Semantics of Programming Languages - An
  Introduction}}.
\newblock \bibinfo{series}{Foundations of Computing Series},
  \bibinfo{publisher}{MIT Press}.

\end{thebibliography}

%
\end{document}